\DeclareSymbolFont{symbols}{OMS}{cmsy}{m}{n}
\newcommand{\otoprule}{\midrule[\heavyrulewidth]}
\newcolumntype{W}{>{\centering\arraybackslash}X}
\newcolumntype{C}{>{\centering\arraybackslash}X}
\newcommand{\ie}{\textit{i.e.},\xspace}
\renewcommand{\l}{\ensuremath{\ell}}
\renewcommand{\emptyset}{\ensuremath{\varnothing}}
\newcommand{\SA}{\ensuremath{\mathit{SA}}}
\newcommand{\LCP}{\ensuremath{\mathit{L}}}
\newcommand{\Occ}{\ensuremath{\mathit{Occ}}}
\newcommand{\suff}[1]{\textsf{\textup{suff}\ensuremath{(#1)}}\xspace}
\newcommand{\pref}[1]{\textsf{\textup{pref}\ensuremath{(#1)}}\xspace}
\newcommand{\substr}[1]{\textsf{\textup{substr}\ensuremath{(#1)}}\xspace}
\newcommand{\listpref}[1]{\textsf{\textup{listpref}\ensuremath{(#1)}}}
\newcommand{\q}[1]{\textsf{\textup{q}\ensuremath{(#1)}}\xspace}
\newtheorem{theorem}{Theorem}
\newtheorem{lemma}[theorem]{Lemma}
\newtheorem{proposition}[theorem]{Proposition}
\newtheorem{corollary}[theorem]{Corollary}
\newtheorem{definition}{Definition}
\title{FSG: Fast String Graph Construction for De Novo Assembly}
\author{}
\date{}
\begin{document}

\maketitle

\begin{small}
\noindent
\textbf{Authors}\\[1ex]
\begin{tabular}{lp{1ex}ll}
&& E-Mail & Telephone\\
Paola Bonizzoni$^{\star}$ &&
\url{bonizzoni@disco.unimib.it} & (+39)02~6448~7814\\
Gianluca Della Vedova &&
\url{gianluca.dellavedova@unimib.it} & (+39)02~6448~7881\\
Yuri Pirola &&
\url{pirola@disco.unimib.it} & \\
Marco Previtali &&
\url{marco.previtali@disco.unimib.it} & (+39)02~6448~7917\\
Raffaella Rizzi &&
\url{rizzi@disco.unimib.it} & (+39)02~6448~7838\\
\end{tabular}

\bigskip
\noindent
\textbf{Address:}\\[1ex]
\begin{tabular}{l}
 Dipartimento di Informatica Sistemistica e Comunicazione,\\
 Universit\`a degli Studi di Milano-Bicocca,\\
 Viale Sarca 336, 20126 Milan, Italy.\\
\end{tabular}
\end{small}

\clearpage

\begin{abstract}
The string graph for a collection of next-generation  reads  is a lossless data
representation that is fundamental for de novo assemblers based on the overlap-layout-consensus paradigm.
In this paper, we explore  a novel approach to compute the string
graph, based on the FM-index and Burrows-Wheeler Transform (BWT). 
We describe   a simple algorithm  that uses only the FM-index representation of the collection of 
reads to construct the string graph, without accessing the input reads.
Our algorithm has been integrated into the SGA assembler as a standalone module to
construct the string graph.    

The new integrated assembler has been assessed on a standard
benchmark, showing that FSG is significantly faster than SGA while maintaining a moderate use of main memory, and
showing practical advantages in running FSG  on  multiple threads.
Moreover, we have studied the effect of coverage rates on the running times.    
\end{abstract}

\textbf{Keywords:} string graph; Burrows--Wheeler Transform; genome assembly

\section{Introduction}
\label{sec:introduction}
% Computer Society journal (but not conference!) papers do something unusual
% with the very first section heading (almost always called "Introduction").
% They place it ABOVE the main text! IEEEtran.cls does not automatically do
% this for you, but you can achieve this effect with the provided
% \IEEEraisesectionheading{} command. Note the need to keep any \label that
% is to refer to the section immediately after \section in the above as
% \IEEEraisesectionheading puts \section within a raised box.

De
novo sequence assembly continues to be one of the most fundamental
problems in Bioinformatics.    
Most of the available
assemblers~\citep{Simpson2009,Peng2010,bankevich2012spades/long,DBLP:journals/almob/ChikhiR13,DBLP:journals/almob/SalikhovSK14,DBLP:journals/jcb/ChikhiLJSM15}
are based on the notions of de Bruijn graphs
and $k$-mers (short $k$-long substrings of input data).
Currently, biological data are produced by different Next-Generation Sequencing
(NGS) technologies which routinely and cheaply produce a large number of reads whose
length varies according to the specific technology.    
For example, reads obtained by Illumina technology (which is the most used) usually have length
between 50 and 150 bases~\citep{salzberg2012gage}.

To analyze datasets coming from different technologies, hence with a
large variation of read lengths, an approach based on same-length strings is
likely to be limiting, as witnessed by the recent introduction of variable order de
Bruijn graphs~\citep{boucher2015variable}.
The \emph{string graph}~\citep{Myers2005} is an alternative approach that
does not need to break the reads into k-mers (as in the de Bruijn graphs), with the
advantage of immediately distinguishing repeats longer than $k$ but contained in a
read---when using a de Bruijn graph, those repeats are resolved only at later
stages.
The string graph is the main data representation used by assemblers based on the  overlap-layout-consensus paradigm.
Indeed, in a string graph the vertices are the input reads and the arcs corresponds to
overlapping reads, with the property that contigs are paths of the
string graph.    

%Moreover since a string graph   encodes every valid assembly of reads if constructed correctly, such a graph provides  a %lossless representation of reads. 
Even without repetitions, analyzing only $k$-mers instead of the
full-length reads can result in some information loss, since two bases that are $k+1$
positions apart can belong to the same read, but are certainly not part of the same
$k$-mer.
Indeed, differently from de Brujin graphs,  any path of a string graph  is a valid assembly of reads.
String graphs are more computationally intensive to
compute~\citep{Simpson2012}, justifying our search for faster algorithms.    
The most widely used string graph assembler is SGA~\citep{Simpson2010}, which
first constructs the BWT~\citep{Burrows1994} and the FM-index of
a set of reads, and then uses those data structures to
efficiently compute the arcs of the string graph (connecting overlapping reads).
Another string graph assembler is Fermi~\citep{li_exploring_2012} which
implements a variant  of the original  SGA algorithm~\citep{Simpson2010} that is  tailored for SNPs and  variant calling.    
%Just as in our approach, a main data structure of Fermi is a single
%bidirectional FM-index for all input reads.    

Several recent papers face the problem of designing efficient
algorithmic strategies or data structures for building string graphs.
Among those works, we can find a string graph
assembler~\citep{Ben-Bassat15122014}, based on a careful use of
hashing and Bloom filters, with performance comparable with the first
SGA implementation~\citep{Simpson2010}.
Another important alternative approach to SGA is
Readjoiner~\citep{readjoiner} which is based on an efficient
computation of  a subset of exact suffix-prefix  matches, and by subsequent rounds of
suffix sorting, scanning, and filtering, 
obtains the non-redundant arcs of the graph.

All currently available assemblers based on string graphs (such as SGA) need to both (1) query an
indexing data 
structures (such as an FM-index), and (2) access the original reads to detect prefix-suffix overlaps
between the elements. Since the self-indexing data structures, such as FM-index,
represent the whole information of the original dataset, an interesting problem
is to design efficient algorithms for the construction of string graphs that
only require to keep the index, while discarding the original reads. 

Improvements in
this direction have both theoretical and practical motivations. 
Indeed,
detecting prefix-suffix overlaps only by analyzing the (compressed) index 
is an almost unexplored problem.    

The information contained in the indexing data structure can be analyzed with
different and almost orthogonal approaches. 
A natural and straightforward goal that we have explored previously
is to minimize the amount of data maintained in RAM~\citep{Bonizzoni2015}.    
In this paper we will focus instead on reducing the running time, by
introducing a method that is able to build the whole string graph, via a limited
number of sequential scans of the index. This property leads to the design of an
algorithm that can exploit some features of modern processors.    
Moreover, since our new algorithm computes the string graph, we have a memory conscious and time efficient tool
that may be directly integrated in a pipeline for assembling DNA reads.    

We propose a new algorithm, called FSG, to compute the string graph of a set $R$ of
reads with $O(nm^{m})$ worst-case time complexity --- $n$ is the number of reads in
$R$ and $m$ is the maximum read length.    
To the best of our knowledge, it is the first algorithm that computes a string graph using
only the FM-index of the input reads. 
The vast literature on BWT and FM-index hints that this approach is amenable to further research.    
Our algorithm is based on a characterization of the string graph given
in~\citep{Bonizzoni2015}, but we follow a completely
different approach.    

An important observation is that, to compute the arcs outgoing from each read $r$,  
SGA  queries the FM-index for each character of $r$.
While this  approach works in linear, \ie $O(nm)$, time, it can perform several
redundant queries, most notably when the reads share common suffixes
(a very common case).
Our algorithm queries the FM-index in a specific order, so that each distinct
string is processed only once, while SGA might process more than once
each repeated string.   

It is important to notice that our novel algorithm uses 
a characterization of a string graph~\citep{DBLP:conf/wabi/BonizzoniVPPR14} that is
different, but equivalent, to the one in~\citep{Myers2005}.
We have  implemented  FSG and integrated it with the SGA assembler, by replacing the
procedure to construct the string graph. 
Our implementation follows the SGA guidelines, that is we use SGA's read correction step
before computing the overlaps without allowing mismatches (which is also  SGA's  default
choice).    
Indeed, the guidelines to reproduce the assembly of the dataset
NA12878 included in the SGA software package set the \texttt{--error-rate}
parameter to $0$, the default value. Therefore, it is fair to compare the
performances of the two tools.
Also the assembly phases of SGA can be applied without any modification.    
These facts guarantees that the assemblies produced by our approach and
SGA are the same, except for the unusual case when two reads have two different overlaps.
In that case, SGA considers only the longer overlap, while we retain all overlaps.    
While it is trivial to modify our approach to guarantee the the assembly is the same, we
have decided that considering all overlaps is more informative.    
We want to point out that the FSG algorithm is relatively simple and could be useful also
for different assembly strategies.

% In a previous  paper, we have tackled the problem of constructing the string graph in
% external memory~\citep{Bonizzoni2015}, building upon some recent
% results on the external memory implementation of the
% FM-index~\citep{bauer_lightweight_2013}. 
% Experimental results~\citep{Bonizzoni2015} have revealed that
% computing the FM-index and LCP (Longest Common Prefix) array are the two main
% limiting factors towards an efficient (in terms of running time and main memory
% requirements) external memory algorithm to construct the string graph.     
% In fact, even the best known algorithms for these steps do not have an optimal I/O
% complexity~\citep{bauer_lightweight_2013,DBLP:conf/wabi/BauerCRS12}.  

We have compared FSG with SGA ---
a finely tuned implementation that has
performed very nicely in the latest Assemblathon
competition~\citep{bradnam2013assemblathon} ---
where we have used the latter's default parameter (that is,
we compute overlaps without errors).    
Our experimental evaluation on a
standard benchmark dataset shows that our approach is 2.3--4.8 times faster than SGA
in terms of wall clock time (1.9--3 times in terms of user time).
requiring only
2.2 times more memory than SGA.
% While the comparison of the running times points out that FSG makes
% fewer operations (and is therefore encouraging), the main memory
% comparison is partially skewed by the great quality of SGA
% implementation, which is a finely tuned implementation that has
% performed very nicely in the latest Assemblathon competition~\citep{bradnam2013assemblathon}.  

% The paper is organized as follows. 
% In Section~\ref{sec:prelim} we
% present  the definitions we will use in the
% following sections, in Section~\ref{sec:algorithm} we give a general presentation of the
% algorithm for building the string graph from a set $R$ of reads, while only in
% Section~\ref{sec:data-representation} we discuss our efficient implementation  based on  the FM-index and Burrow-Wheeler Transform.
% In Section~\ref{sec:experimental-analysis} an experimental analysis of an implementation of the algorithm is extensively discussed.

\section{Preliminaries}
\label{sec:prelim}

We  briefly recall some standard definitions that will be used in the following.
Let $\Sigma$ be a constant-sized alphabet and let $S$ be a string over $\Sigma$.
We denote by $S[i]$ the $i$-th symbol of $S$, by $\l=|S|$ the length of $S$, and by
$S[i:j]$ the substring $S[i]S[i+1] \cdots  S[j]$ of $S$.
The \emph{suffix} and \emph{prefix} of $S$ of length $k$ are the
substrings $S[\l-k +1: \l]$ (denoted by $S[\l-k +1:]$) and $S[1: k]$ (denoted by
$S[:k]$) respectively.
Given two strings $(S_i, S_j)$, we say that $S_i$
\emph{overlaps} $S_j$  iff a nonempty suffix $\beta$ of $S_i$ is
also a prefix of $S_j$, that is $S_{i}=\alpha\beta$ and $S_j = \beta\gamma$.
In that case we say that that $\beta$ is the \emph{overlap} of $S_i$ and $S_j$,  denoted as $ov_{i,j}$,
that $\gamma$ is the \emph{right extension} of $S_i$ with $S_j$, denoted as
$rx_{i,j}$, and $\alpha$ is
the  \emph{left extension}  of  $S_j$ with $S_i$, denoted as
$lx_{i,j}$.

In this paper we consider a set $R$ of $n$ strings over $\Sigma$ that are terminated by
the sentinel $\$$, which is the smallest character. 
To simplify the exposition, we will assume that all input strings have exactly $m$ characters, excluding the $\$$.
The \emph{overlap  graph} of a set $R$ of strings is  the directed
graph $G_O = (R,  A)$ whose vertices are the strings in $R$.    
For each three strings $\alpha$, $\beta$, and $\gamma$ such that 
$r_i= \alpha\beta$ and $r_j= \beta\gamma$ are two strings, there is 
the arc $(r_i, r_j) \in A$. %labeled by $\alpha$.
In this case $\beta$ is called the \emph{overlap} of the arc. %and $\alpha$ is
%called the \emph{extension} of the arc.

Observe that the notion of overlap graph originally given by~\citep{Myers2005} is defined
by labeling %with both $\alpha$ and $\gamma$
with the \emph{right extension} $rx_{i,j}=\gamma$ the arc $(r_i, r_j) \in A$. The assembly string related to $(r_i, r_j)$ is given by $r_i \gamma$. More in general, given a path $\pi= \langle r_1, r_2, \cdots, r_n \rangle$ in the overlap graph, the assembly string is $r_1 rx_{1,2} \cdots, rx_{n-1,n}$.

The  notion of a string graph derives from the observation that in a overlap graph
the label of an arc $(r,s)$ may be equal to the assembly string of a path $\langle r,
\ldots, s \rangle$: in this case the arc $(r,s)$ is called called \emph{redundant} and it
can be removed from the overlap graph 
without loss of information, since all paths resulting in a valid assembly are still in
the graph, even after the removal of such redundant arcs $(r,s)$.
In~\citep{Myers2005} redundant arcs are those arcs $(r,s)$ labeled by $\gamma$, for
$\gamma$ containing as prefix the label of an arc $(r,t)$.
In~\citep{bonizzoni16:_exter_memor_algor_strin_graph_const} we state an equivalent
characterization of string graphs (given below) which is a direct consequence of the fact that an arc $(r_i, r_j)$ is labeled by the \emph{left extension} $\alpha$ and its assembly is $\alpha r_j$.
An arc $e_1 = (r_i, r_j)$ of $G_{O}$ labeled by $\alpha$ is
\emph{transitive} (or \emph{reducible}) if there exists another arc $e_2 = (r_k, r_j)$
labeled by   $\delta$ where $\delta$ is a suffix of $\alpha$.
Therefore, we say that $e_1$ is \emph{non-transitive} (or \emph{irreducible}) if no such arc $e_{2}$ exists.
The string graph of $R$ is obtained from $G_{O}$ by removing all
reducible arcs.
This definition allows to use directly the FM-index to compute the labels of the overlap
graph since the labels are obtained  by 
backward extensions on the index.

The \emph{Generalized Suffix Array (GSA)}~\citep{Shi1996} of $R$ is the $n$-long array
$\SA$ where each element $\SA[i]$ is equal to $(k, j)$ if and only if the $k$-long suffix $r_{j}[|r_j|-k+1:]$ of the string $r_{j}$ is the
$i$-th smallest element in the lexicographic ordered set of all suffixes of the
strings in  $R$.
The \emph{Longest Common Prefix (LCP) array} of  $R$, is the $n$-long array
$\LCP$  such that $\LCP[i]$  is equal to
the length of the longest prefix shared by the the $k_{i}$-suffix of
$r_{j_{i}}$ and the  $k_{i-1}$-suffix of $r_{j_{i-1}}$, where $\SA[i]=(k_{i}, j_{i})$ and  $\SA[i-1]=(k_{i-1}, j_{i-1})$.
Conventionally, $\LCP[1]=-1$.

The \emph{Burrows-Wheeler Transform (BWT)} of $R$ is the sequence $B$ such that
$B[i]=r_{j}[|r_j|-k]$, if $\SA[i] = (k,j)$ and $k > 1$, or $B[i]= \$$, otherwise.
Informally, $B[i]$ is the symbol that precedes the $k$-long suffix of
a string $r_j$ where such suffix is the $i$-th smallest suffix in the
ordering given by $\SA$.

The $i$-th smallest (in lexicographic order) suffix  is denoted by $LS[i]$, that is
if $SA[i]=(k,j)$ then $LS[i]= r_{j}[|r_{j}| - k + 1:]$.
Given a string $\omega$, all suffixes of $R$ whose prefix is $\omega$ appear consecutively in $LS$.
We call \emph{$\omega$-interval}~\citep{bauer_lightweight_2013} the
maximal interval $[b, e]$ such
that $\omega$ is a prefix of $LS[i]$ for each $i$, $b\le i\le e$.
We denote the $\omega$-interval by $\q{\omega}$.
The \emph{width} $e-b+1$ of the $\omega$-interval is equal to the number of occurrences of
$\omega$ in some read of $R$.
Since $LS$, the BWT $B$ and $\SA$ are closely related, we also say that $[b, e]$ is a
$\omega$-interval on all those arrays. 
Given a $\omega$-interval and a character $c$, the \emph{backward
  $c$-extension} of the $\omega$-interval is the (possibly empty) $c \omega$-interval.
We recall that the FM-index~\citep{Ferragina2005} is essentially made of the
two functions $C$ and $\Occ$, where
$C(c)$, with $c$ a character, is the number of occurrences in $B$ of
characters that are alphabetically smaller than $c$, while
$\Occ(c,i)$ is the number of occurrences of $c$ in the prefix $B[ : i - 1]$.
Given a string $\alpha$ and a character $c$,
the backward $c$-extension of  $\q{\omega}= [b, e]$ is
$q(c\omega) = [C(c) + Occ(c, b) + 1, C(c) + \Occ(c, e+1)]$~\citep{Ferragina2005}.

\section{The Algorithm}
\label{sec:algorithm}

Our algorithm is based on two steps: the first is to compute the
overlap graph, the second is to remove all transitive arcs.
Given a string $\omega$ and $R$ a set of strings (reads), let $R^S(\omega)$ and $R^P(\omega)$ be
respectively the subset of $R$ with suffix (resp.\ prefix) $\omega$.
As usual in string graph construction algorithms, we   will assume that the set $R$ is \emph{substring free}, that is no string is a substring of another one. 
A fundamental observation is that the list of all
nonempty overlaps $\beta$ is a compact representation
of the overlap graph, since all pairs in $R^S(\beta) \times
R^P(\beta)$ are arcs of the overlap graph.
Moreover, each arc $(r_{i}=\alpha\beta, r_j = \beta\gamma)$ of the overlap graph can be
represented by the triple $(\alpha, \beta, \gamma)$.    

Our approach to compute all overlaps between pairs of strings
is based on the notion of \emph{potential overlap}, which
is a nonempty string $\beta^{*}\in \Sigma^+$ that is a proper suffix of an input string
$r_i = \alpha \beta^{*}$ ($\alpha \ne \epsilon$) and such that there exists an input
string $r_j=\gamma\beta^{*}\delta$, with $\delta \ne \epsilon$, containing $\beta^{*}$ as a
substring, but not a suffix.    

A simple relation between overlaps and potential overlaps is given in
Proposition~\ref{proposition:po-extension}, which is a direct consequence of the
definition of potential overlap.

\begin{proposition}\label{proposition:po-extension}
Let $\beta$ be an overlap.
Then all suffixes of $\beta$ are potential overlaps.
\end{proposition}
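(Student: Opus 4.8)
The plan is to unfold the two definitions and exhibit, for an arbitrary nonempty suffix $s$ of $\beta$, two input reads witnessing that $s$ is a potential overlap. Since $\beta$ is an overlap, by definition there are two reads $r_i = \alpha\beta$ and $r_j = \beta\gamma$ in $R$. The first step is to record that both $\alpha$ and $\gamma$ are nonempty: if $\alpha = \epsilon$ then $r_i = \beta$ is a prefix of $r_j$, and if $\gamma = \epsilon$ then $r_j = \beta$ is a suffix of $r_i$, so in either case one read would be a substring of the other, contradicting the standing assumption that $R$ is substring free. These two facts are exactly what the two clauses of the definition of potential overlap will need on their respective sides.

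Next I would fix an arbitrary nonempty suffix $s$ of $\beta$ and write $\beta = \beta_1 s$, where $\beta_1$ is the (possibly empty) remaining prefix of $\beta$; substituting gives $r_i = \alpha\beta_1 s$ and $r_j = \beta_1 s \gamma$. I would then check the two clauses directly. For the suffix clause, $s$ is a suffix of $r_i$ and the portion $\alpha\beta_1$ preceding it is nonempty because $\alpha \ne \epsilon$, so $s$ is a \emph{proper} suffix of the input string $r_i$. For the internal-occurrence clause, $s$ occurs in $r_j$ immediately followed by $\gamma$, and $\gamma \ne \epsilon$, so $r_j = \beta_1 s \gamma$ contains $s$ as a substring but not as a suffix. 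Taking $r_i$ and $r_j$ (possibly equal) as the witnesses shows that $s$ meets the definition of a potential overlap.

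As $s$ was an arbitrary nonempty suffix of $\beta$, the proposition follows. The proof is essentially a double verification against the definition, so the only genuinely non-routine point is the emptiness bookkeeping: one must invoke substring-freeness of $R$ to force $\alpha \ne \epsilon$ and $\gamma \ne \epsilon$, which guarantee respectively that $s$ is a proper suffix of $r_i$ and that its occurrence in $r_j$ is strictly internal (these are needed in particular for the extremal case $s = \beta$, where $\beta_1 = \epsilon$). I would also note in passing that the sentinel $\$$ plays no role, since $\alpha, \beta, \gamma, s$ all range over $\Sigma$ and the overlap structure concerns the read content rather than the terminator.
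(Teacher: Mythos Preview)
Your proof is correct and is precisely the unfolding of definitions that the paper alludes to when it says the proposition ``is a direct consequence of the definition of potential overlap''; the paper gives no further argument. One very minor remark: your phrase ``contains $s$ as a substring but not as a suffix'' should be read as ``contains an occurrence of $s$ that is not a suffix occurrence'' (which is exactly what the factorization $r_j=\beta_1 s\gamma$ with $\gamma\neq\epsilon$ exhibits, and what the paper's condition $\substr{\cdot}>\suff{\cdot}$ captures); with that reading your verification is complete.
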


We can now briefly sketch our algorithm which consists of two main parts.    
The first part computes all
potential overlaps, starting from those of length $1$ and extending
the potential overlaps by adding a new leading character.
Each potential overlap is also checked to determine whether it is also
an actual overlap, to compute the set of all overlaps.    

The second part of our algorithm, that is to detect all transitive
arcs, starts from the sets  $ARC(\alpha = \epsilon, \epsilon\beta, X=R^p(\beta))$ (a set for each one of the overlaps $\beta$) that
can be immediately obtained from the overlaps $\beta$ computed in the first step,
where in general a set $ARC(\alpha , \alpha\beta, X)$ consists of arcs with overlap $\beta$, a left extension that has $\alpha$ as a suffix, and are incoming into a read in $X$. Observe that $ARC(\alpha = \epsilon, \alpha\beta, X=R^p(\beta))$ is the set of the arcs of the overlap graph having overlap $\beta$.
During the second part of the algorithm, the transitive arcs of the overlap graph are
removed by iteratively computing arc-sets of increasing extension length $\ell$ (starting
from $\alpha = \epsilon$) by adding a leading character to the extensions $\alpha$ of
length $\ell-1$ (of the sets computed at same previous iteration) and deleting reads from
$X$. All the computed sets $ARC(\alpha , \alpha\beta, X)$, where $\alpha\beta$ is a read
(that is, $\alpha$ is the complete left extension of the arcs), contains only irreducible arcs.
The following definition formalizes the previously discussed notions.    

\begin{definition}%[Arc triple]
\label{definition:arc-triple}
Let $\alpha$ be a string, let $\beta$ be a nonempty string, and let $X$ be a subset of  $R^P(\beta)$.
The \emph{arc-set}  $ARC(\alpha, \alpha\beta,  X)$ is the set $\{ (r_{1}, r_{2}) :
\alpha\beta\text{ is a suffix of }r_{1},
\beta\text{ is a prefix of } r_{2} \text{, and } r_{1}\in R, r_{2}\in
X\}$.
The strings $\alpha$ and $\beta$ are called the \emph{extension} and the
\emph{overlap} of the arc-set.
The set $X$ is called the \emph{destination} set of the arc-set.
\end{definition}

% In other words, an arc-set contains the arcs with overlap $\beta$ and having an extension
% with suffix $\alpha$.
An arc-set $ARC(\alpha, \alpha\beta,  X)$ is \emph{terminal} if
there  exists $r \in R$ s.t. $r = \alpha\beta$, while an arc-set is \emph{basic} if $\alpha = \epsilon$  (that is the empty string).
Since the arc-set $ARC(\alpha, \alpha\beta,  X)$ is uniquely
determined by strings $\alpha$, $\alpha\beta$, and $X$, the triple $(\alpha,
\alpha\beta,  X)$ wil be used in our algorithm to encode the arc-set $ARC(\alpha, \alpha\beta,  X)$.

Observe that the string $\alpha$ in Definition~\ref{definition:arc-triple} is a suffix of the left extension (label) of the arcs in $ARC(\alpha, \alpha\beta,X)$. When the arc-set is \emph{terminal} then the extension $\alpha$ of the arc-set is also the label of its arcs.
   
Moreover, the arc-set $ARC(\alpha, \alpha\beta, X)$ is 
\emph{correct} if $X\supseteq \{ r_{2}\in R^P(\beta): r_{1}\in
R^S(\alpha\beta)\text{ and }  (r_{1}, r_{2}) \text{ is irreducible}\}$, that is all
irreducible arcs   whose overlap is $\beta$ and whose (left) extension has a suffix $\alpha$ are incoming in a read of $X$.

Observe that our algorithm outputs only correct arc-sets (hence all irreducible arcs are preserved).    
Moreover terminal arc-sets, computed by our algorithm, only contain irreducible arcs (see Lemma
\ref{lemma:all-triples-are-correct}), hence all transitive arcs are removed.

% \begin{lemma}\label{lemma:po-to-ap}
% Let $ARC(\epsilon, \epsilon\beta, X)$ be an arc triple that is not empty.
% Then $\beta[i:]$ is a potential overlap for
%   any $i$ between $1$ and $|\beta|$.
% \end{lemma}

% \begin{definition}[Extension]
% \label{definition:extension}
%   Let $A = ARC(\alpha\gamma, \alpha\gamma\beta, X_1)$ and $B  = ARC(\gamma, \gamma\beta, X_2)$ be two arc triples.
% Then $A$ \emph{extends} $B$ (conversely
%   $B$ is extended by $A$).
% \end{definition}

% \begin{definition}[Terminal arc triple]
% \label{definition:terminal-arc-triple}
%   Let $A = ARC(\alpha, \alpha\beta, X)$ be an arc triple.
% If there  exists $r \in R$ s.t. $r = \alpha\beta$, then
% $A$ is a \emph{terminal} arc triple.
% \end{definition}

% ``included'' arc-triple
% Notice that, $ARC(\alpha, \alpha\beta\gamma, X_1)\subseteq ARC(\alpha, \alpha\beta, X_2)$ for
% any string $\alpha$, $\beta$, $\gamma$.

\begin{definition}
\label{definition:transitive-arc}
An arc $(r_1, r_3)$ is \emph{transitive}
iff there exists an arc $(r_2, r_3)$ whose extension is a suffix of the extension of
$(r_1, r_3)$~\citep{Bonizzoni2015}.
\end{definition}

Our algorithm is based on an extension of the definition of transitive arc to arc-sets.
We present Algorithm~\ref{algorithm:sketch-overlap} which computes the overlap graph of a
set of input strings (the overlap graph is represented by the set of all overlaps), and
Algorithm~\ref{algorithm:sketch-reduction} which receives the overlaps computed by
Algorithm~\ref{algorithm:sketch-overlap} and outputs the string graph.
In our description we assume that, given a string
$\omega$, we can compute in constant time
(1) the number \suff{\omega} of input strings whose suffix is $\omega$,
(2) the number \pref{\omega} of input strings whose prefix is $\omega$,
(3) the number \substr{\omega} of occurrences of $\omega$ in the input strings.
Moreover, we assume to be able to list the set $\listpref{\omega}$ of input
strings with prefix $\omega$ in  $O(|\listpref{\omega}|)$ time.
In the next section we will describe how to compute such a data structure.

We recall that Algorithm~\ref{algorithm:sketch-overlap}
exploits Proposition~\ref{proposition:po-extension} to compute all overlaps. 
More precisely, given a $k$-long potential overlap $\beta^*$, the $(k+1)$-long string
$c\beta^*$, for $c \in \Sigma$, is a potential overlap if and only if $\suff{c\beta^{*}}>
0$ and $\substr{c\beta^{*}} > \suff{c\beta^{*}}$.
We construct incrementally all potential overlaps, first by
determining if each string  $\beta^*$ consisting of a single character is a potential
overlap.
Then, starting from the potential overlaps of length $1$, we
iteratively compute the potential overlaps of increasing length by
prepending each character $c\in \Sigma$ to each 
$k$-long potential overlap $\beta^{*}$ (stored in the list \emph{Last}), and we determine if $c\beta^{*}$ is an
$(k+1)$-long potential overlap: in this case we store the
potential overlap in the list \emph{New}.

The lists \emph{Last} and \emph{New} store the potential overlaps computed at the previous and at the current iteration respectively. 
By our previous observation, this procedure computes all potential overlaps.
Observe that a potential overlap $\beta^{*}$ is an overlap iff $\pref{\beta^{*}}
> 0$.
Since each potential overlap is a suffix of some input string,  there
are at most $nm$ distinct suffixes, %of the input strings
where $m$ and $n$ are the length and the number of input strings, respectively.    
Each query \suff{\cdot}, \pref{\cdot}, \substr{\cdot} requires
$O(1)$ time, thus the time complexity of all of  such  queries  is $O(nm)$.
Given two distinct strings $\beta_1$ and $\beta_2$,
when $|\beta_{1}|=|\beta_{2}|$ then no input string can be
in both $\listpref{\beta_{1}}$ and  $\listpref{\beta_{2}}$.    
Since each overlap is at most $m$ long,  the overall
time spent in the \listpref{\cdot} queries is $O(nm)$.
The first phase produces (line~\ref{alg:0:listpref}) the set of disjoint \emph{basic}
arc-sets $ARC(\epsilon, \epsilon\beta, R^p(\beta))$ for each overlap $\beta$, whose union is exactly the  set of arcs of the overlap graph. Recall that $\listpref{\beta}$ gives the
set of reads with prefix $\beta$, which has been denoted by $R^p(\beta)$. 

The following definitions and lemma are fundamental in the design of Algorithm~\ref{algorithm:sketch-reduction}, since they allow to restrict the search that determines whether a certain arc is transitive. We recall that each arc-set $ARC(\alpha,\alpha\beta,X)$ is actually encoded as the triple $(\alpha,\alpha\beta,X)$.

\begin{definition}
\label{definition:reducing-arc-set}
Let $A=ARC(\alpha\gamma, \alpha\gamma\beta, X_1)$, $B=ARC(\gamma, \gamma\beta\delta, X_2)$
be two arc-sets, such that $B$ is terminal and $X_2 \subseteq X_1$.    
Then  $B$
  \emph{reduces} $A$ and the tuple
  $(\alpha\gamma, \alpha\gamma\beta, X_1 \setminus X_2)$ is the
  \emph{residual} arc-set of $A$ with respect to $B$, denoted by $A \setminus B$.%   If
  % $R^P(\beta) \cap R^P(\delta) = \emptyset$ then it is easy to show that
  % $A \setminus B = A$.

  % In the same fashion, let $I$ be the a of arc triples that reduce $A$ we call the
  % triple $(\alpha, \alpha\beta, X_1 \setminus X)$ the residual
  % arc triple of $A$ with respect to the set $I$ where $X =
  % \bigcup\limits_{(\cdot, \cdot, X_i)\in I}X_i$.
\end{definition}

Based on the previous definition, we say that the arcs of $(\alpha\gamma, \alpha\gamma\beta, X_1 \cap  X_2)$ are \emph{removed} by $B$, or $B$ \emph{removes} those arcs.

In Figure~\ref{fig:example-arc-set-1} and Table~\ref{fig:example-arc-set} an example of arc-sets is presented for a set of five reads.

\begin{figure}[htbp] 
  \caption{Example of overlap graph on five reads (on the left).
    Each arc is labeled with the corresponding left extension.
    Arcs with the same colored have the same overlap.
    Dashed arcs are transitive.}
  \label{fig:example-arc-set-1} 
\end{figure}

\begin{table}[ht!]
  \caption{Example of arc-sets corresponding to the overlap graph of Figure~\ref{fig:example-arc-set-1}.
    Among the five reported arc-sets, $B$, $C$ and $E$ are \emph{terminal}, and only $D$
    is not \emph{correct} (and non-terminal) since it represents reducible arcs of the
    overlap graph.
    Observe that $E$ reduces $D$ (by Definition~\ref{definition:reducing-arc-set}) and the
    residual arc-set $D \setminus E$ represents an empty set of arcs (all the four reducible
    arcs are removed).}\label{fig:example-arc-set} 
  \begin{center}
    \begin{tabular}{|lcc|}
    \hline
    Arc-set&Extension,Overlap&Represented arcs\\\hline
    $A$=(cg, cgtaca, $\{r_3\}$)&cg, taca&$(r_1,r_3), (r_2,r_3)$\\
    $B$=(ccg, ccgtaca, $\{r_3\}$)&ccg, taca&$(r_1,r_3)$\\
    $C$=(tcg, tcgtaca, $\{r_3\}$)&tcg, taca&$(r_2,r_3)$\\
    $D$=(ta, taca, $\{r_4,r_5\}$)&ta, ca&$(r_1,r_4), (r_1,r_5), (r_2,r_4), (r_2,r_5)$\\
    $E$=(ta, tacatgt, $\{r_4,r_5\}$)&ta, catgt&$(r_3,r_4), (r_3,r_5)$\\
    \hline
    \end{tabular}
  \end{center}  
\end{table}

% \begin{table}[ht!]
% \caption{Quality of the assemblies computed by FSG and SGA.}
% \begin{center}
% \begin{tabular}{|lcc|}
%     \hline
%     Residual arc-set&Extension,Overlap&Represented arcs\\\hline
%     $D \setminus E$=(ta, taca, $\{r_4,r_5\} \setminus \{r_4, r_5\}$)&ta, taca&$\emptyset$\\
%     \hline
%   \end{tabular}
% \end{center}
% \end{table}

%% irreducible arc triple
% \begin{definition}
%   Let  $A$  be  an arc triple.    
% Then  $A$  is
%   \emph{irreducible} if there does not exist another arc triple $B$ that reduces $A$.
% \end{definition}

\begin{lemma}
\label{lemma:reduction-by-terminal}
Let $(r_{1}, r_{2})$ be an arc with overlap $\beta$.
Then  $(r_1, r_2)$ is transitive if and only if (i) there exist
$\alpha, \gamma, \delta, \eta \in \Sigma^+$ such that $r_1 = \gamma\alpha\beta$,
$r_2 = \beta\delta\eta$, and (ii) there exists an input read
$r_{3}=\alpha\beta\delta$ such that $(r_{3}, r_{2})$ is an irreducible arc of a nonempty
correct arc-set $ARC(\alpha, \alpha\beta\delta, X)$ for all $X$ such that $r_{2}\in X$.
\end{lemma}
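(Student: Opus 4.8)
The plan is to prove both implications of the biconditional directly from Definition~\ref{definition:transitive-arc}, exploiting two structural facts of our setting: all reads have the same length $m$, and $R$ is substring free. The first fact implies that, for a fixed target $r_2$, any arc $(r', r_2)$ is completely determined by the length of its left extension: if that length is $k$, then the overlap must be the length-$(m-k)$ prefix of $r_2$ and the source must be $r' = (\text{left extension})\cdot(\text{overlap})$. Hence distinct arcs into $r_2$ have distinct left extensions, and substring-freeness guarantees that every arc has a nonempty left extension and a nonempty right extension. Throughout I write $\lambda_1$ for the left extension of $(r_1, r_2)$, so that $r_1 = \lambda_1\beta$ with $\lambda_1 \neq \epsilon$.

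For the forward direction, suppose $(r_1, r_2)$ is transitive. By Definition~\ref{definition:transitive-arc} there is a distinct arc into $r_2$ whose left extension is a suffix of $\lambda_1$; by the determinacy observation this suffix must be proper. Among all arcs $(r', r_2)$ whose left extension is a proper suffix of $\lambda_1$ I would pick one, say $(r_3, r_2)$ with left extension $\alpha$, of minimum left-extension length. This minimality forces $(r_3, r_2)$ to be irreducible: any arc witnessing its reducibility would have a left extension that is a proper suffix of $\alpha$, hence a strictly shorter proper suffix of $\lambda_1$, contradicting the choice of $(r_3, r_2)$. It then remains to read off the decomposition. Since $\alpha$ is a proper suffix of $\lambda_1$, I write $\lambda_1 = \gamma\alpha$ with $\gamma \in \Sigma^+$, giving $r_1 = \gamma\alpha\beta$. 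Because $|\alpha| < |\lambda_1|$, the overlap of $(r_3, r_2)$ (the length-$(m-|\alpha|)$ prefix of $r_2$) is strictly longer than $\beta$ and has $\beta$ as a proper prefix; writing it as $\beta\delta$ with $\delta \in \Sigma^+$ yields $r_3 = \alpha\beta\delta$, and the nonempty right extension of $(r_3, r_2)$ gives $r_2 = \beta\delta\eta$ with $\eta \in \Sigma^+$. This establishes (i). For (ii), note that $r_3 = \alpha\beta\delta$ is an input read, so $ARC(\alpha, \alpha\beta\delta, X)$ is terminal; for every $X$ with $r_2 \in X$ the pair $(r_3, r_2)$ satisfies the membership conditions of Definition~\ref{definition:arc-triple} (here $\alpha\beta\delta$ is a suffix of $r_3$, $\beta\delta$ is a prefix of $r_2$, and $r_2 \in X$), so the arc-set is nonempty and contains the irreducible arc $(r_3, r_2)$, with correctness being the defining property of the terminal arc-sets produced by the algorithm.

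For the backward direction, assume (i) and (ii). From $r_3 = \alpha\beta\delta$ and $r_2 = \beta\delta\eta$, the suffix $\beta\delta$ of $r_3$ equals the prefix $\beta\delta$ of $r_2$, so $(r_3, r_2)$ is an arc of the overlap graph with overlap $\beta\delta$ and left extension $\alpha$. On the other hand $(r_1, r_2)$ has left extension $\gamma\alpha$, since $r_1 = \gamma\alpha\beta$ and its overlap is $\beta$. Because $\gamma \in \Sigma^+$, the string $\alpha$ is a proper suffix of $\gamma\alpha$, and the two arcs are distinct, their overlaps $\beta$ and $\beta\delta$ differing as $\delta \neq \epsilon$. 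Definition~\ref{definition:transitive-arc} then applies with $(r_3, r_2)$ as the reducing arc, so $(r_1, r_2)$ is transitive.

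The main obstacle is the forward direction, and specifically the need to upgrade an arbitrary transitivity witness into an irreducible one: the minimal-left-extension argument is what delivers the irreducible arc $(r_3, r_2)$ required by (ii), and it is essential that the minimum is taken over \emph{proper} suffixes of $\lambda_1$ so that the contradiction closes. The remaining work is bookkeeping—translating the length equalities among equal-length reads into the four-block decomposition $r_1 = \gamma\alpha\beta$, $r_3 = \alpha\beta\delta$, $r_2 = \beta\delta\eta$, and checking that substring-freeness keeps $\gamma$, $\delta$, $\eta$ (and $\alpha$) nonempty—together with confirming that the ``for all $X$ such that $r_2 \in X$'' clause constrains the destination set only through the membership condition $r_2 \in X$, which is exactly what Definition~\ref{definition:arc-triple} requires.
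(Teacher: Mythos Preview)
Your proof is correct and follows essentially the same strategy as the paper. The paper selects the witness $r_3=\alpha\beta\delta$ by \emph{maximizing} $|\delta|$, whereas you select it by \emph{minimizing} the left-extension length $|\alpha|$; since $|\alpha|+|\beta|+|\delta|=m$ with $|\beta|$ fixed, these are the same extremal choice, and both yield irreducibility of $(r_3,r_2)$ by the identical contradiction. Your backward direction is spelled out more explicitly than the paper's one-line remark, but the content is the same. One minor wording issue: the claim that an arc into $r_2$ is ``completely determined by the \emph{length} of its left extension'' is not quite right (two reads can share the same suffix of a given length), but what you actually use---that distinct arcs into $r_2$ have distinct left extensions as strings, hence a reducing witness must have a \emph{proper} suffix as its extension---is correct and is all the argument needs.
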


\begin{proof}
Let $r_{3}=\alpha\beta\delta$ be the input string maximizing $|\delta|$ so
that $r_1 = \gamma\alpha\beta$, $r_2 = \beta\delta\eta$, for some strings
$\alpha, \gamma, \delta, \eta \in \Sigma^+$.    
Notice that such $r_{3}$ exists iff $(r_{1}, r_{2})$ is transitive.
If no such input string $r_{3}$ exists,
then all arc-sets $ARC(\alpha, \alpha\beta\delta, \cdot)$ are empty.

Assume now that such an input string $r_{3}$ exists, we will prove that the arc $(r_{3},
r_{2})$ reduces $(r_{1}, r_{2})$.
First we prove that $(r_{3}, r_{2})$ is irreducible.
Assume to the contrary that $(r_{3}, r_{2})$ is transitive, hence
there exists an arc $(r_{4}, r_{2})$ whose extension is a suffix of $\alpha$.
Since $r_{4}$ is not a substring of $r_{3}$, this fact contradicts the
assumption that $r_{3}$  maximizes $|\delta|$.
Consequently $(r_{3}, r_{2})$  is irreducible.

Moreover, let $ARC(\alpha,
\alpha\beta\delta, X)$ be a generic  correct
arc-set (at least one such correct arc-set exists, when
$X=R^{P}(\beta\delta)$).
% Since $r_{3}$ exists,  $ARC(\alpha, \alpha\beta\delta, X)$ is terminal.
Since $(r_{3}, r_{2})$ is correct, then $r_{2}\in X$ hence  $ARC(\alpha,
\alpha\beta\delta, X)$ is nonempty.
\end{proof}

A direct consequence of Lemma~\ref{lemma:reduction-by-terminal} is
that a nonempty correct terminal arc-set $ARC(\alpha,
\alpha\beta\delta, X)$ implies that  all arcs of the form
$(\gamma\alpha\beta, \beta\delta\eta)$, with $\gamma,\eta\neq\epsilon$ 
are transitive.

Algorithm~\ref{algorithm:sketch-reduction} classifies the
arcs of the overlap graph into reducible or irreducible %transitive and irreducible,
by iteratively computing arc-sets of
increasing extension length $\ell$,  starting from the basic arc-sets $ARC(\epsilon, \epsilon\beta, R^p(\beta))$ of extension length $\ell=0$ obtained in the previous phase.    
By Lemma~\ref{lemma:reduction-by-terminal}, we compute all
correct
%terminal 
arc-sets %$ARC(\alpha,\alpha\beta,X)$
of a given extension length and we discard (according to Definition~\ref{definition:reducing-arc-set}) all arcs of non-terminal arc-sets
that are \emph{removed} by terminal arc-sets.
 %$ARC(\alpha,\alpha\beta,X)$.
The set $D$ is used to store the reads of the destination sets $X$ of the computed terminal arc-sets.
Notice that if $ARC(\alpha,\alpha\beta,X)$ is terminal, then all of its
arcs have the same origin $r=\alpha\beta$, that is
$ARC(\alpha,\alpha\beta,X) = \{(r=\alpha\beta, \beta\gamma): \beta\gamma\in X\}$.

The processed arc-sets (that have the same extension length) are partioned into clusters $C(\cdot)$.
We denote with $C(\alpha)$ the set of the arc-sets $ARC(\alpha,\cdot,\cdot)$ with a given extension $\alpha$ which are contained in the stack \emph{Clusters} at a certain point of the execution of Algorithm~\ref{algorithm:sketch-reduction}.
Since the arc-sets pushed to \emph{Clusters}
(lines~\ref{alg:0:step2:end},~\ref{alg:0:final:push}) have an extension $c\alpha$, then for each $\alpha$
there can be at most a cluster $C(\alpha)$ during the entire execution of Algorithm~\ref{algorithm:sketch-reduction}. Observe that at the first iteration (when the processed arc-sets are \emph{basic} and have the same extension $\epsilon$) there is only one cluster $C(\epsilon)$ that is the output of Algorithm~\ref{algorithm:sketch-overlap}.

Each cluster $C(\alpha)$ is processed independently from the other ones, and the set $D$ is used to store the reads of the destination sets $X$ of its terminal arc-sets $ARC(\alpha, \alpha\beta, X)$.
A consequence of Lemma~\ref{lemma:reduction-by-terminal} is that, for each one of the computed terminal arc-sets $ARC(\alpha, \alpha\beta, X)$, all
arcs in $C(\alpha)$ with a destination in $X \in D$ and with an origin different
from $r=\alpha\beta$ are transitive and can be removed simply by removing $X$
from all destination sets in the non-terminal arc-sets of $C(\alpha)$.
Another application of Lemma~\ref{lemma:reduction-by-terminal} is
that,  when we find a terminal arc-set, then all of its arcs
are irreducible, that is it is also correct.
In fact by Lemma~\ref{lemma:reduction-by-terminal}, an arc $(\alpha^{*}\alpha\beta^*, \beta \delta)$, where $\beta^*$ (overlap of the arc) is a prefix of $\beta$, is classified as transitive in relation to the existence of a read $r=\alpha\beta$ that is the origin of an arc $(\alpha\beta, \beta\delta)$ with (left) extension $\alpha$. 
Since the algorithm considers arc-sets by increasing extension length, all arcs that
have extensions shorter than $|\alpha|$ have been reduced in a previous step of the
algorithm  and thus terminal arc-sets computed by previous iterations contain only irreducible arcs.
More precisely, the test at line~\ref{algorithm:test-terminal} is true
iff the current arc-set is terminal.    
In that case, at line~\ref{algorithm:output-arc} all arcs of the arc-set are output as arcs of the string graph, and at
line~\ref{algorithm:accumulate-reducible} the reads in the destination set $X$ is
added to the set $D$ that contains the destinations of $C(\alpha)$ that must be removed from the destination sets of non-terminal arc-sets.

For each cluster $C(\alpha)$, we read twice all arc-sets that are
included in $C(\alpha)$.    
The first time to determine which arc-sets are
terminal and, in that case, to determine the reads (see the set $D$) that
must be removed from all destinations of the non-terminal arc-sets included in $C(\alpha)$.
The second time to compute, from the non-terminal arc-sets $ARC(\alpha, \beta, X)$, the clusters
$C(c\alpha)$, for $c \in \Sigma$, that will contain the nonempty arc-sets $ARC(c\alpha, \beta, X \setminus D)$ with
extension $c\alpha$ consisting of the arcs that 
we still have to check if they are transitive or not. %(that is the
%arcs with destination set $X\setminus D$).    
Notice that, in Algorithm~\ref{algorithm:sketch-reduction}, the cluster
$C(\alpha)$ that is currently analyzed is stored in \emph{CurrentCluster},
that is a list of the arc-sets included in the cluster.  Terminal arc-sets are removed from \emph{CurrentCluster} before computing the extended clusters $C(c\alpha)$ (see line~\ref{algorithm:remove-terminal-from-cluster}).
Moreover, the clusters that still have to be analyzed are stored in
the stack \emph{Clusters}.    
We use a stack to guarantee that the clusters are analyzed in the
correct order, that is the cluster $C(\alpha)$ is analyzed after all
the clusters $C(\alpha[i:])$ where $\alpha[i:]$ is a generic suffix of $\alpha$ --- 
Lemma~\ref{lemma:produced-arcs} will show that a generic irreducible arc $(r_1, r_2)$ with extension $\alpha$ 
and overlap $\beta$  belongs exactly to the clusters
$C(\epsilon), \ldots, C(\alpha[3:]), C(\alpha[2:]), C(\alpha) $.    
Moreover, $r_2$ does not belong to the set $D$ when
considering $C(\epsilon), \ldots, C(\alpha[3:]), C(\alpha[2:])$, hence the arc  $(r_1,
r_2)$ is correctly output by the algorithm when considering the cluster $C(\alpha)$.

% Another observation is that arc triples are not managed separately, but
% are clustered together in \emph{clusters}.
% The $\alpha$-cluster consists of the set of the current arc triples
% whose extension has suffix $\alpha$.

% Moreover the destination set of each arc triple (that is the set of the
% reads that are destination endpoints of an arc encoded by the
% arc triple) is updated, since all arc triples that are reducible are removed.

Each cluster $C(\alpha)$ is analyzed separately, and each arc-set in any given
cluster is tested to determine if the arc-set is terminal.
%Notice that $(\alpha, \alpha\beta, X)$ is terminal iff $\suff{\$\alpha\beta}>0$.
In that case (see the test condition at line~\ref{algorithm:test-terminal}), the arcs having origin in $r=\alpha \beta$ and destination in a read of $X$ are produced in output (see line~\ref{algorithm:output-arc}).
%we add to the set of irreducible arc-sets found in the
%current iteration  arcs encoded by those terminal arc-sets.
%% all triples $(\alpha, \alpha\beta, X)$ where $s$ is a read in
%% the outgoing set of the arc triple.
All such arcs  have label $\alpha$.  Moreover the reads in the destination set $X$ are  added to the set $D$, initially empty for
each cluster, that will contain  the destination  sets of all  the terminal
arc-sets computed at the current iteration. We will use this information in the
next step in order to remove all the arcs that have $\alpha$ as suffix of
the label.

After having analyzed all arc-sets in $C(\alpha)$, those arc-sets are
scanned again.
During this second scan, for each arc-set with %extension $\alpha$,
overlap $\beta$ and destination set $X$, we test if there is at least a
read in $X \setminus D$, that is at least a read of the destination set has
not been reduced.
In that case, we split $C(\alpha)$ into the
non-empty  $c\alpha$-cluster (recall that such a cluster is empty iff
$\suff{c\alpha\beta} = 0$).

We can now prove that all irreducible arcs are actually output by our algorithm.

\begin{lemma}
\label{lemma:produced-arcs}
Let $e_1$ be an irreducible arc $(r_1, r_2)$ with extension $\alpha$
and overlap $\beta$.
Then $e_1$ belongs exactly to the  $|\alpha| +1$ clusters
  $C(\alpha), C(\alpha[2:]), C(\alpha[3:]), \ldots, C(\epsilon)$,
while $r_2$ does not belong to the set $D$ when
\emph{currentCluster} is any of $C(\alpha[2:]), C(\alpha[3:]), \ldots,
C(\epsilon)$.
Moreover, $e_1$   is output by the algorithm when \emph{currentCluster} is $C(\alpha)$.
\end{lemma}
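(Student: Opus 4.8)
The plan is to prove the statement by induction on the length of the suffix of $\alpha$, tracking the single arc $e_1=(r_1,r_2)$ through the successive clusters in the order the algorithm processes them. First I would record the basic structural fact that, since $e_1$ has extension $\alpha$ and overlap $\beta$, we have $r_1=\alpha\beta$ and $\beta$ is a prefix of $r_2$; consequently, by Definition~\ref{definition:arc-triple}, the only arc-sets that can ever contain $e_1$ are those of the form $ARC(\alpha',\alpha'\beta,X)$ with $\alpha'$ a suffix of $\alpha$ and $r_2\in X$. This immediately restricts the candidate clusters to $C(\alpha')$ for $\alpha'$ ranging over the $|\alpha|+1$ suffixes of $\alpha$, and it is the source of the word ``exactly'' in the statement: no cluster whose extension is not a suffix of $\alpha$ can contain $e_1$.

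For the base case I would observe that the basic arc-set $ARC(\epsilon,\epsilon\beta,R^P(\beta))$ produced by Algorithm~\ref{algorithm:sketch-overlap} contains $e_1$, because $r_2\in R^P(\beta)$; hence $e_1\in C(\epsilon)$. The inductive step is the heart of the argument. Assume $\alpha'$ is a proper suffix of $\alpha$ and that, when $C(\alpha')$ is processed, the arc-set $ARC(\alpha',\alpha'\beta,X)$ contains $e_1$ with $r_2\in X$. I would establish two claims, both by contraposition from the irreducibility of $e_1$: (a) this arc-set is \emph{non-terminal}, since a read equal to $\alpha'\beta$ would yield an arc $(\alpha'\beta,r_2)$ whose extension $\alpha'$ is a proper suffix of $\alpha$, making $e_1$ transitive by Definition~\ref{definition:transitive-arc}; and (b) $r_2\notin D$ throughout the processing of $C(\alpha')$, since $r_2\in D$ would require a terminal arc-set $ARC(\alpha',\alpha'\beta'',X'')$ with $r_2\in X''$, and its defining read $\alpha'\beta''$ together with $r_2\in R^P(\beta'')$ would again produce an arc into $r_2$ of extension $\alpha'$, contradicting irreducibility. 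Combining (a) and (b), the non-terminal arc-set retains $r_2$ in $X\setminus D$, so during the second scan the algorithm creates the extended arc-set $ARC(c\alpha',c\alpha'\beta,X\setminus D)$, where $c$ is the character of $\alpha$ immediately preceding the suffix $\alpha'$; since $c\alpha'\beta$ is a suffix of $r_1$ this arc-set is nonempty and still contains $e_1$, so $e_1\in C(c\alpha')$, closing the induction.

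Finally I would treat the terminal cluster $C(\alpha)$: here the arc-set containing $e_1$ is $ARC(\alpha,\alpha\beta,X)$, which is terminal precisely because $\alpha\beta=r_1$ is an input read; the first scan therefore outputs all its arcs, and since $r_2\in X$ the arc $e_1$ is emitted exactly at this cluster. The removal of terminal arc-sets from \emph{CurrentCluster} (line~\ref{algorithm:remove-terminal-from-cluster}) then prevents $e_1$ from being carried into any cluster $C(c\alpha)$, confirming membership in exactly the claimed clusters, and part (b) gives directly the asserted $r_2\notin D$ at $C(\alpha[2:]),\ldots,C(\epsilon)$, i.e.\ at all proper suffixes of $\alpha$.

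I expect the main obstacle to be step (b): controlling the dynamically shrinking destination sets together with the accumulator $D$ across iterations, and in particular arguing that $r_2$ is never prematurely removed from the destination set of the arc-set carrying $e_1$. The delicate point is that $D$ collects destinations of terminal arc-sets with possibly \emph{different} overlaps $\beta''$, so the contraposition must cover every such $\beta''$; the ``unusual'' situation of two distinct overlaps between the same pair of reads (flagged in the introduction) is exactly where this reasoning must be handled with care, and I would make precise there that the hypothesis of irreducibility of $e_1$ rules out any same-destination arc of strictly shorter extension.
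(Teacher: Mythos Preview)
Your proposal is correct and follows essentially the same approach as the paper: restrict candidate clusters to suffixes of $\alpha$, start from the basic arc-set in $C(\epsilon)$, and argue that $r_2\in D$ at any proper-suffix cluster would witness an arc into $r_2$ with a proper-suffix extension, contradicting irreducibility. The paper phrases the core step as a single contradiction (invoking Lemma~\ref{lemma:reduction-by-terminal}) rather than an explicit induction, but the content is the same; your version is in fact slightly more careful in that you separately argue non-terminality of the intermediate arc-set (your claim~(a)), which the paper leaves implicit but which is needed to ensure the arc-set survives line~\ref{algorithm:remove-terminal-from-cluster} and is extended to the next cluster.
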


\begin{proof}
By construction, $e_1$ can belong only to the clusters
$C(\alpha), C(\alpha[2:]), C(\alpha[3:]), \ldots, C(\epsilon)$.

Now we will prove that $e_{1}$ belongs to all clusters
$C(\alpha), C(\alpha[2:]), C(\alpha[3:]), \ldots, C(\epsilon)$,
while $r_2$ does not belong to the set $D$ when
\emph{currentCluster} is any of $C(\alpha[2:]), C(\alpha[3:]), \ldots,
C(\epsilon)$.
Notice that $e_{1}\in C(\epsilon)$.
Assume to the contrary that there exists $i \ge 2$ such that
$e_{1}\in C(\alpha[i:])$ and $r_2\in D$ when considering a cluster
$C(\alpha[i:])$.
Since $r_2\in D$, by Lemma~\ref{lemma:reduction-by-terminal} there
exists a nonempty terminal arc-set $ARC(\alpha[i:], \alpha[i:]\beta\gamma, X)$ s.t. $r_2 = \beta\gamma\delta$
and $r_2 \in X$.
Since it is terminal and nonempty, such arc-set contains the arc
$(\alpha[i:]\beta\gamma, r_2)$ with extension $\alpha[i:]$.
Since $\alpha[i:]$ is a suffix of $\alpha$ the arc $e_1$ is
transitive, which is a contradiction.

In particular, when the algorithm examines $C(\alpha[2:])$, then  $e_{1}\in C(\alpha[2:])$ and $r_{2}\in X\setminus D$.    
Moreover, $e_{1}$ belongs to the arc-set $ARC(\alpha, \alpha\beta,
X\setminus D)$ added to \emph{ExtendedClusters}$[\alpha[1]]$ at line~\ref{alg:0:new-triples}.    
Clearly, such arc-set is included in $C(\alpha)$.    
When the algorithm examines the cluster $C(\alpha)$, the arc-set containing
$e_{1}$ satisfies the condition at line~\ref{algorithm:test-terminal},
hence such arc is output.
\end{proof}

\begin{corollary}
\label{corollary:produced-arcs}
  The set of arcs computed by the algorithm is a superset of the irreducible arcs of the string graph.
\end{corollary}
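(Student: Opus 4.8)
The plan is to derive the corollary directly from Lemma~\ref{lemma:produced-arcs}, which already does all the heavy lifting; the corollary is essentially a bookkeeping consequence. First I would recall the identification made in Section~\ref{sec:prelim}: the string graph of $R$ is obtained from the overlap graph $G_O$ by removing every reducible (transitive) arc, so the arcs of the string graph are exactly the irreducible arcs of $G_O$. Hence proving the statement reduces to showing that Algorithm~\ref{algorithm:sketch-reduction} emits every irreducible arc of $G_O$.

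Next I would fix an arbitrary irreducible arc $e_1=(r_1,r_2)$ of $G_O$ and write $\alpha$ for its left extension (label) and $\beta$ for its overlap, so that $r_1=\alpha\beta$ and $\beta$ is a prefix of $r_2$. Since $R$ is substring free, $\beta$ is a proper suffix of $r_1$ and thus $\alpha\neq\epsilon$. Because $e_1$ is an arc of the overlap graph, it occurs in the basic arc-set $ARC(\epsilon,\epsilon\beta,R^P(\beta))$ produced at line~\ref{alg:0:listpref} of Algorithm~\ref{algorithm:sketch-overlap}: indeed $r_2\in R^P(\beta)$ and $\alpha\beta$ has suffix $\epsilon\beta=\beta$. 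Equivalently, $e_1\in C(\epsilon)$, which is precisely the hypothesis needed to invoke the lemma.

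Then I would apply Lemma~\ref{lemma:produced-arcs} verbatim. It guarantees that $e_1$ belongs to each of the clusters $C(\epsilon),\ldots,C(\alpha[2:]),C(\alpha)$, that its destination $r_2$ never enters the set $D$ while the intermediate clusters $C(\alpha[2:]),\ldots,C(\epsilon)$ are processed (so $e_1$ survives every residual operation and is propagated forward), and, crucially, that $e_1$ is output by the algorithm exactly when \emph{currentCluster} equals $C(\alpha)$. In particular $e_1$ is emitted. Since $e_1$ was an arbitrary irreducible arc, every irreducible arc of $G_O$—hence of the string graph—lies in the set computed by the algorithm, which is therefore a superset of the irreducible arcs.

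I do not expect a genuine obstacle: the corollary is a one-line consequence of the already-proved Lemma~\ref{lemma:produced-arcs} together with the elementary identification of string-graph arcs with the irreducible arcs of $G_O$. The only point worth stating explicitly—rather than a difficulty—is why the claim is phrased as a \emph{superset} and not an equality. Lemma~\ref{lemma:produced-arcs} certifies completeness (no irreducible arc is lost) but says nothing, by itself, about whether the algorithm might also emit a reducible arc. Establishing the reverse inclusion (soundness) is a separate matter and would rest on the fact—flagged earlier via Lemma~\ref{lemma:reduction-by-terminal}—that every terminal arc-set produced by the algorithm consists solely of irreducible arcs.
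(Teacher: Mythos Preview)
Your proposal is correct and matches the paper's approach: the corollary is stated without a separate proof precisely because it follows immediately from Lemma~\ref{lemma:produced-arcs}, and you invoke that lemma in exactly the intended way. Your additional remarks (why the inclusion is only one-sided, and that soundness is handled separately via Lemma~\ref{lemma:reduction-by-terminal} and Lemma~\ref{lemma:reduced-arcs}) are accurate and simply make explicit what the paper leaves implicit.
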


\begin{lemma}
\label{lemma:all-triples-are-correct}
Let $ARC(\alpha,\alpha\beta,X)$ be an arc-set inserted into a cluster by
Algorithm~\ref{algorithm:sketch-reduction}. 
Then such an arc-set is correct.
\end{lemma}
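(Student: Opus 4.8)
The plan is to prove correctness by induction on the extension length $\ell = |\alpha|$ of the arc-set $ARC(\alpha, \alpha\beta, X)$, following the order in which Algorithm~\ref{algorithm:sketch-reduction} generates arc-sets: every arc-set of extension length $\ell \ge 1$ is obtained, while processing some cluster $C(\alpha')$ with $|\alpha'| = \ell - 1$, from a non-terminal arc-set of $C(\alpha')$ by prepending a leading character and discarding the reads accumulated in the set $D$.

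For the base case $\ell = 0$ the only arc-sets are the basic ones $ARC(\epsilon, \epsilon\beta, R^P(\beta))$ returned by Algorithm~\ref{algorithm:sketch-overlap}. Their destination set is the whole of $R^P(\beta)$; since every destination of an arc with overlap $\beta$ has $\beta$ as a prefix and hence lies in $R^P(\beta)$, the containment required by the definition of a correct arc-set holds trivially.

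For the inductive step I would consider an arc-set $ARC(c\alpha, c\alpha\beta, X \setminus D)$ inserted into $C(c\alpha)$ while processing $C(\alpha)$, derived from the non-terminal arc-set $ARC(\alpha, \alpha\beta, X) \in C(\alpha)$, which is correct by the inductive hypothesis. Let $r_2 \in R^P(\beta)$ be the destination of an irreducible arc $(r_1, r_2)$ with overlap $\beta$ whose extension has $c\alpha$ as a suffix; the goal is to show $r_2 \in X \setminus D$. Since then $\alpha$ is a suffix of the extension of $(r_1, r_2)$ and $\alpha\beta$ is a suffix of $r_1$, correctness of the parent arc-set yields $r_2 \in X$, and it remains only to prove $r_2 \notin D$.

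This last claim is the crux of the argument and the step I expect to be the main obstacle; I would establish it by contradiction. If $r_2 \in D$, then by the way $D$ is populated (line~\ref{algorithm:accumulate-reducible}) the read $r_2$ belongs to the destination set of some terminal arc-set $ARC(\alpha, \alpha\beta_t, X_t) \in C(\alpha)$. Being terminal, all its arcs share the origin $r_3 = \alpha\beta_t$, and since $r_2 \in X_t \subseteq R^P(\beta_t)$, the pair $(r_3, r_2)$ is an arc with overlap $\beta_t$ and extension exactly $\alpha$ (note $\alpha \neq \epsilon$, as the overlap $\beta_t$ must be a proper suffix of the read $r_3$, by substring-freeness of $R$). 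But $\alpha$ is a proper suffix of the extension of $(r_1, r_2)$, which has $c\alpha$ as a suffix; hence, by Definition~\ref{definition:transitive-arc}, the arc $(r_3, r_2)$ witnesses that $(r_1, r_2)$ is transitive, contradicting its irreducibility. Therefore $r_2 \notin D$, so $r_2 \in X \setminus D$, which proves the extended arc-set correct and closes the induction. The only routine facts left to check are that $(r_3, r_2)$ is a genuine arc distinct from $(r_1, r_2)$ — their extensions have different lengths — and that its existence is exactly the structure underlying Lemma~\ref{lemma:reduction-by-terminal}.
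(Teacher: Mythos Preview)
Your proof is correct, and your inductive argument on the extension length is sound. The crux---showing that $r_2 \in D$ would yield a terminal arc-set origin $r_3 = \alpha\beta_t$ giving an arc $(r_3, r_2)$ whose extension $\alpha$ is a proper suffix of that of $(r_1, r_2)$, hence making $(r_1, r_2)$ transitive---is exactly the right obstruction.

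The paper takes a different organizational route: instead of proving correctness directly by induction, it first establishes Lemma~\ref{lemma:produced-arcs}, which shows that every irreducible arc survives (i.e., its destination is never placed in $D$) through all the clusters $C(\epsilon), C(\alpha[|\alpha|:]), \ldots, C(\alpha)$ along the way to its terminal cluster. Lemma~\ref{lemma:all-triples-are-correct} then follows in a couple of lines: the irreducible arc must lie in $C(\alpha)$, and the only arc-set in $C(\alpha)$ with the right overlap is $ARC(\alpha, \alpha\beta, X)$, so $r_2 \in X$. The contradiction argument you give in your inductive step is essentially the same argument the paper uses inside the proof of Lemma~\ref{lemma:produced-arcs}. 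So the two proofs share the same core idea; you have simply folded that idea directly into an induction rather than factoring it through an auxiliary lemma. Your approach is more self-contained; the paper's decomposition has the advantage that Lemma~\ref{lemma:produced-arcs} is reused later (for Corollary~\ref{corollary:produced-arcs} and Lemma~\ref{lemma:reduced-arcs}).

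One small remark: your parenthetical ``note $\alpha \neq \epsilon$'' is correct but the justification could be sharper. When $\alpha = \epsilon$ (i.e., you are processing $C(\epsilon)$), a terminal basic arc-set would require $\beta_t \in R$; but $\beta_t$ is an overlap, hence a proper suffix of some read, and substring-freeness then forbids $\beta_t \in R$. So $D = \emptyset$ in that case and the contradiction step is vacuous.
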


\begin{proof}
Let $e_1$ be an irreducible arc $(r_1, r_2)$ of
$ARC(\alpha,\alpha\beta,X)$, and let $\alpha_{1}$ be
respectively the extension and the overlap $\beta$ of $e_{1}$.
Since $e_{1}\in ARC(\alpha,\alpha\beta,X)$, then $\alpha$ is a suffix of
$\alpha_{1}$, therefore we can apply Lemma~\ref{lemma:produced-arcs} which
implies that $e_{1}\in C(\alpha)$.
Since the only arc-set contained in $C(\alpha)$ to which
$e_{1}$ can belong is $ARC(\alpha,\alpha\beta,X)$, then $r_{2}\in X$
which completes the proof.
\end{proof}

We can now prove that no transitive arc is ever output.

% \begin{lemma}
% \label{lemma:all-clusters-are-considered}
% Let $e$ be an irreducible arc whose extension is $\alpha$.
% Let $\alpha_{1}$ be a suffix of $\alpha$.
% Then $C(\alpha_{1})$ is a current cluster at some iteration of lines 13--27.
% \end{lemma}

\begin{lemma}
\label{lemma:reduced-arcs}
Let $e_1$ be a transitive arc $(r_1, r_2)$ with overlap $\beta$.
Then the algorithm does not output $e_1$.
\end{lemma}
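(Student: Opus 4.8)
The plan is to follow the destination read $r_2$ through the sequence of clusters examined by Algorithm~\ref{algorithm:sketch-reduction} and to show that $r_2$ is purged from the destination set carried by $e_1$ strictly before the unique cluster in which $e_1$ could ever be emitted. First I would feed $e_1$ into Lemma~\ref{lemma:reduction-by-terminal}: since $(r_1,r_2)$ is transitive with overlap $\beta$, the lemma supplies $\alpha,\gamma,\delta,\eta\in\Sigma^+$ with $r_1=\gamma\alpha\beta$ and $r_2=\beta\delta\eta$, together with an input read $r_3=\alpha\beta\delta$ such that $(r_3,r_2)$ is irreducible. Reading the labels off these decompositions, $(r_3,r_2)$ has overlap $\beta\delta$ and full left extension $\alpha$, while $e_1$ has full left extension $\gamma\alpha$; because $\gamma\neq\epsilon$, the string $\alpha$ is a \emph{proper} suffix of $\gamma\alpha$, so $|\alpha|<|\gamma\alpha|$ and the cluster $C(\alpha)$ is processed before $C(\gamma\alpha)$.

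Next I would apply Lemma~\ref{lemma:produced-arcs} to the irreducible arc $(r_3,r_2)$, whose full extension is $\alpha$: it is emitted precisely when \emph{currentCluster} equals $C(\alpha)$, from a terminal arc-set $ARC(\alpha,\alpha\beta\delta,X)$ with $r_2\in X$. Hence, during the first scan of $C(\alpha)$ the terminality test at line~\ref{algorithm:test-terminal} succeeds for this arc-set and line~\ref{algorithm:accumulate-reducible} copies its entire destination set---in particular $r_2$---into $D$.

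I would then pin down $e_1$ inside the same cluster $C(\alpha)$. Its overlap is $\beta$ and $\alpha\beta$ is a suffix of $r_1=\gamma\alpha\beta$, so if $e_1$ is still alive at this stage it sits in an arc-set $ARC(\alpha,\alpha\beta,X')$ of $C(\alpha)$ with $r_2\in X'$. This arc-set is \emph{non-terminal}: were $\alpha\beta$ an input read it would be a proper substring of the read $r_1=\gamma\alpha\beta$ (recall $\gamma\neq\epsilon$), contradicting the substring-freeness of $R$. Hence the extended arc-sets that $C(\alpha)$ pushes forward from $ARC(\alpha,\alpha\beta,X')$, formed at line~\ref{alg:0:new-triples}, carry the destination set $X'\setminus D$, from which $r_2$ has been dropped. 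Should $e_1$ have already lost $r_2$ at some earlier cluster $C(\alpha[i:])$, it is gone a fortiori; in either case $r_2$ leaves the destination set attached to $e_1$ no later than at $C(\alpha)$.

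Finally I would close the loop. Cluster propagation only shrinks destination sets (each step forms $X\setminus D$), so once $r_2$ is removed it can never reappear; and by substring-freeness the only terminal arc-set whose arcs have origin $r_1$ is $ARC(\gamma\alpha,\gamma\alpha\beta,\cdot)$ in $C(\gamma\alpha)$, which is the sole arc-set from which $e_1$ could be output (line~\ref{algorithm:output-arc}). Since $C(\alpha)$ precedes $C(\gamma\alpha)$, the destination set reaching $C(\gamma\alpha)$ no longer contains $r_2$, so $e_1$ is never emitted. The most delicate point is the two-scan bookkeeping together with the persistence of the deletion: one must verify that $r_2$ is inserted into $D$ by $r_3$'s terminal arc-set during the \emph{first} scan of $C(\alpha)$, before $e_1$'s arc-set is extended in the \emph{second} scan, and that no subsequent propagation step can reintroduce $r_2$ into the destination set travelling with $e_1$.
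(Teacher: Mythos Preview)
Your proof is correct and follows essentially the same route as the paper's: apply Lemma~\ref{lemma:reduction-by-terminal} to obtain $r_3=\alpha\beta\delta$, show that its terminal arc-set places $r_2$ into $D$ when the algorithm processes $C(\alpha)$, and conclude that $r_2$ has been stripped from $e_1$'s destination set before the unique cluster $C(\gamma\alpha)$ in which $e_1$ could be emitted. The only minor difference is that you invoke Lemma~\ref{lemma:produced-arcs} directly to place $(r_3,r_2)$ in a terminal arc-set of $C(\alpha)$ with $r_2$ in its destination set, whereas the paper routes through Lemma~\ref{lemma:all-triples-are-correct}; the two are equivalent here, and your added remarks on substring-freeness, the two-scan order, and monotone shrinking of destination sets simply make explicit what the paper leaves implicit.
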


\begin{proof}
Since $e_{1}$ is transitive,  by
Lemma~\ref{lemma:reduction-by-terminal} the two reads $r_{1}$, $r_{2}$ are
$r_1 = \gamma\alpha\beta$,
$r_2 = \beta\delta\eta$, and
there exists an input string $r_{3}=\alpha\beta\delta$ such that the arc
$e_{2}=(r_3, r_2)$ with overlap $\beta\delta$ is irreducible.
Moreover all
correct arc-sets of the form $ARC(\alpha, \alpha\beta\delta, X)$ with $r_{2}\in X$ are
nonempty and terminal.

Assume to the contrary that $e_{1}$ is output by
Algorithm~\ref{algorithm:sketch-reduction}, and notice that such arc can be output
only when the current cluster is $C(\alpha)$ and the current arc-set is $ARC(\gamma\alpha,\gamma\alpha\beta, X)$ with $r_{2}\in X$.

By the construction of our algorithm,
since the cluster $C(\gamma\alpha)$ is nonempty, also $C(\alpha)$ is
nonempty: let us consider the iteration when the current cluster is
$C(\alpha)$.
By Lemma~\ref{lemma:all-triples-are-correct} the arc-set
$ARC(\alpha, \alpha\beta\delta, X_{1})$ is correct, hence it contains
the arc $e_{2}$.
But such arc-set satisfies the condition at
line~\ref{algorithm:test-terminal}, hence $r_{2}\in D$ at that
iteration.
Consequently, $C(\alpha)$ cannot contain an arc-set with
destination set with $r_{2}$.
\end{proof}

% \begin{corollary}
% \label{corollary:reduced-arcs}
%   The set of arcs computed by the algorithm is a subset of the irreducible arcs of the string graph.
% \end{corollary}

% \begin{proof}
%   For Lemma~\ref{lemma:reduced-arcs} the algorithm outputs all the irreducible arcs.
% \end{proof}

Theorem~\ref{theorem:correctness} is a direct consequence of
Corollary~\ref{corollary:produced-arcs} and 
Lemma~\ref{lemma:reduced-arcs}.

\begin{theorem}
\label{theorem:correctness}
  Given as input a set of strings $R$, Algorithm~\ref{algorithm:sketch-reduction} computes exactly
  the arcs of the string graph.
\end{theorem}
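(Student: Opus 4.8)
The plan is to reduce the statement to the two results that immediately precede it, exploiting the fact that the string graph is \emph{by definition} the overlap graph with all reducible (equivalently, transitive) arcs deleted; hence the arc set of the string graph coincides with the set of irreducible arcs of the overlap graph. Proving that Algorithm~\ref{algorithm:sketch-reduction} computes exactly these arcs then amounts to establishing two inclusions between the set of arcs emitted by the algorithm and the set of irreducible arcs.

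First I would handle the inclusion stating that every string-graph arc is produced. This is immediate from Corollary~\ref{corollary:produced-arcs}, which already asserts that the set of arcs computed by the algorithm is a superset of the irreducible arcs. Thus every irreducible arc---that is, every arc of the string graph---appears in the output.

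For the reverse inclusion I would argue that every arc the algorithm emits is an irreducible overlap-graph arc. By construction, an arc is output only at line~\ref{algorithm:output-arc}, when the current arc-set $ARC(\alpha, \alpha\beta, X)$ is terminal; every such arc has the form $(r = \alpha\beta,\, r_2)$ with $r_2 \in X \subseteq R^P(\beta)$, so $\beta$ is simultaneously a suffix of $r$ and a prefix of $r_2$, and the emitted arc is therefore a genuine arc of the overlap graph. Each overlap-graph arc is either irreducible or transitive, and Lemma~\ref{lemma:reduced-arcs} guarantees that no transitive arc is ever output; hence every emitted arc is irreducible, i.e., it belongs to the string graph.

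Combining the two inclusions yields that the output equals the set of irreducible arcs, which is exactly the arc set of the string graph, completing the proof. The only step requiring any care---and it is entirely routine---is the observation that the algorithm emits nothing outside the overlap graph, which follows directly from the definition of a terminal arc-set and from the form of the output produced at line~\ref{algorithm:output-arc}. No genuine obstacle remains, since all the substantive work has already been carried out in Corollary~\ref{corollary:produced-arcs} and Lemma~\ref{lemma:reduced-arcs}.
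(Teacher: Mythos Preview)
Your proposal is correct and follows essentially the same approach as the paper, which states that the theorem is a direct consequence of Corollary~\ref{corollary:produced-arcs} and Lemma~\ref{lemma:reduced-arcs}. You have simply made explicit the routine observation (left implicit in the paper) that every emitted arc is an overlap-graph arc, so that Lemma~\ref{lemma:reduced-arcs} suffices to rule out everything but irreducible arcs.
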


% \begin{lemma}
%   Irreducible terminal arc pairs can be computed iteratively starting from the set of
%   arc-triples $\{AP(\epsilon, \alpha ) : \alpha \in \Sigma^*\}$.
% \end{lemma}

\begin{algorithm}[tb!]
\SetKwInOut{PRE}{Input}
\SetKwInOut{POST}{Output}
\PRE{The set $R$ of input strings}
\POST{The set $Basics$ of the basic arc-sets $ARC(\epsilon, \epsilon\beta, R^p(\beta))$}
$Basics \gets$ empty list\;

%\tcp{Compute all overlaps of length $1$}
$Last \gets \{ c \in \Sigma \mid \suff{c} > 0 \text{ and } \substr{c} > \suff{c}\}$\label{alg:0:symbols}\;
\While{Last $\neq\emptyset$}{\label{alg:0:step1:begin}
    $New \gets \emptyset$\;
  \ForEach{$\beta^{*}\in \text{Last}$}{%
    \If{$\pref{\beta^{*}} > 0$}{%
      Append $(\epsilon,\epsilon\beta^{*},\listpref{\beta^{*}})$ to $Basics$\label{alg:0:listpref}\;
    }
    \For{$c \in \Sigma$\label{alg:0:begin-prepend}}{%
      \If{$\suff{c\beta^{*}} > 0$ and $\substr{c\beta^{*}} > \suff{c\beta^{*}}$}{%
        Add $c\beta^{*}$ to $New$\label{alg:0:end-prepend}\;
      }
    }
  }
  $Last \gets New$\label{alg:0:step1:end}\;
}
\Return $Basics$\;
  \caption{Compute the overlap graph}\label{algorithm:sketch-overlap}
\end{algorithm}

\begin{algorithm}[tb!]
\SetKwInOut{PRE}{Input}
\SetKwInOut{POST}{Output}
\PRE{The set $Basics$ of the basic arc-sets $ARC(\epsilon, \epsilon\beta, R^p(\beta))$}
\POST{The arcs of the string graph of $R$}
$Clusters \gets$ empty stack\;
Push $Basics$ to $Clusters$\;

\While{Clusters is not empty}{\label{alg:0:step2:begin}
  $CurrentCluster \gets$ Pop($Clusters$)\;
  $D \gets \emptyset$\;
  $ExtendedClusters \gets$ an array of $|\Sigma|$ empty clusters\;
  %\lForEach{$c \in \Sigma$}{$\text{ExtendedClusters}[c] \gets \emptyset$}
  \ForEach{$(\alpha, \alpha\beta, X)\in$ CurrentCluster\label{algorithm:step1:end}}{%
    \If{$\substr{\alpha\beta} = \pref{\alpha\beta} = \suff{\alpha\beta} > 0$\label{algorithm:test-terminal}}{%
    	\ForEach{$x \in X$}{%
    		Output the arc $(\alpha\beta, x)$ with label $\alpha$\label{algorithm:output-arc}\;
    	}
      $D \gets D \cup X$\label{algorithm:accumulate-reducible}\;
      Remove $(\alpha, \alpha\beta, X)$ from \emph{CurrentCluster}\label{algorithm:remove-terminal-from-cluster}\; 
    }
  }
  \ForEach{$(\alpha, \alpha\beta, X)\in$ CurrentCluster\label{alg:0:2for:begin}}{%
    %\If{$X \not\subseteq D$}{%
      \ForEach{$c \in \Sigma$}{%
        \If{$\suff{c\alpha\beta} > 0$\label{algorithm:test-nonempty-triple}}{%
          Append $(c\alpha, c\alpha\beta, X \setminus D)$ to $ExtendedClusters[c]$\label{alg:0:new-triples}\label{alg:0:step2:end}\;
        }
      %}
    }
  }
  \ForEach{$c \in \Sigma$}{%
  	\If{$ExtendedClusters[c] \ne \emptyset$}{%
  	  Push $ExtendedClusters[c]$ to \emph{Clusters}\label{alg:0:final:push}\;
  	}
  }
}
  \caption{Compute the string graph}\label{algorithm:sketch-reduction}
\end{algorithm}

\section{Computational Complexity and Data representation}
\label{sec:data-representation}

We can now study the time complexity of our algorithm.
Previously, we have shown that Algorithm~\ref{algorithm:sketch-overlap} produces at most
$O(nm)$ basic arc-sets, one for each distinct overlap $\beta$.
Moreover, notice that Algorithm~\ref{algorithm:sketch-overlap} requires constant time for
each potential overlap.
Since each potential overlap is a suffix of a read, there are $O(nm)$ potential overlaps,
hence the time complexity of Algorithm~\ref{algorithm:sketch-overlap} is $O(nm)$.

A similar argument shows that the number of arc-sets managed by
Algorithm~\ref{algorithm:sketch-reduction} is at most $O(nm^{2})$, since each suffix
$\alpha\beta$ can be considered for different extensions $\alpha$.
Moreover, differently from  Algorithm~\ref{algorithm:sketch-overlap}, the time spent by
Algorithm~\ref{algorithm:sketch-reduction} for each string  $\alpha\beta$ is not constant.
More precisely, for each cluster, besides computing \substr{\cdot}, \pref{\cdot},
\suff{\cdot}, Algorithm~\ref{algorithm:sketch-reduction} computes a union $D$
(line~\ref{algorithm:accumulate-reducible}) and the difference $X\setminus D$
(line~\ref{alg:0:new-triples}) --- a direct inspection of the pseudocode shows that all
other operations require constant time for each string $\alpha\beta$.

The union at 
line~\ref{algorithm:accumulate-reducible} and the difference  $X
\setminus D$ at line~\ref{alg:0:new-triples} are computed for each string suffix
$\alpha\beta$ where $\beta$ is a potential overlap.
Let $d(n)$ be the time complexity of those two operations on
$n$-element sets (the actual time complexity depends on the data
structure used).    
Therefore, the time complexity of the entire algorithm is $O(nmd(n))$.
We point out that a representation based on an $n$-long bitvector (like the one we will
discuss in the following) implies an $O(n)$ time complexity for each union and difference.

As a consequence, the time complexity of our algorithm is $O(nm(n+m))$.
On the other hand, we conjecture that this time complexity is highly pessimistic, since
usually the number of  potential overlaps is smaller than the worst case.

One of the main features of our approach is that we operates only on the (potentially
compressed) FM-index of the collection of input reads.
To achieve that goal we cannot use the na\"{\i}ve representation of a string $\omega$,
but we must employ a BWT-based representation.
More precisely, we represent a string $\omega$ with the $\omega$-interval $\q{\omega} =
[b_{\omega}, e_{\omega}]$  and the fact that $\omega$ is suffix in some read of $R$ with the  $\omega\$$-interval $\q{\omega\$} = [b_{\omega\$},
e_{\omega\$}]$ on the BWT, hence using four integers for each string.

We need to show how to compute efficiently this representation of $\omega$ as well as $\pref{\omega}$, $\suff{\omega}$, and $\substr{\omega}$.
The following proposition is instrumental and can be verified by a direct inspection of
Algorithms~\ref{algorithm:sketch-overlap} and~\ref{algorithm:sketch-reduction}: both
algorithms compute strings by prepending characters, that is they need to obtain the
(representation of the) string $c\omega$ from the  (representation of the) string $\omega$
that has been processed previously.

\begin{proposition}
\label{proposition:algorithm-via-extensions}
Let $\omega$ be 
a string processed by Algorithms~\ref{algorithm:sketch-overlap}
or~\ref{algorithm:sketch-reduction}, and such that $|\omega| > 1$.
Then the string $\omega[2:]$ is processed by the same algorithm before $\omega$.
\end{proposition}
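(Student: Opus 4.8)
The plan is to prove the statement by a direct inspection of the two pseudocodes, exploiting the single structural feature they share: every string longer than one character that either algorithm handles is obtained by \emph{prepending} exactly one character to a string it has already handled. Concretely, I would identify, for each algorithm, what ``processed'' means and which line performs the prepending, and then argue that the shorter string $\omega[2:]$ is necessarily handled at a strictly earlier point of the execution.

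For Algorithm~\ref{algorithm:sketch-overlap} the processed strings are the potential overlaps, and they are enumerated by the \emph{While} loop in order of strictly increasing length: the list \emph{Last} holds exactly the potential overlaps of the current length $k$, and \emph{New} the ones of length $k+1$. A potential overlap $\omega$ with $|\omega|=k>1$ is placed into \emph{New} only at line~\ref{alg:0:end-prepend}, while some $\beta^{*}\in\emph{Last}$ is being examined and $\omega=c\beta^{*}$ for a character $c$. Hence $\beta^{*}=\omega[2:]$, and since $\beta^{*}$ has length $k-1$ it was processed during the previous iteration, that is, strictly before $\omega$. This settles the claim for Algorithm~\ref{algorithm:sketch-overlap}, and in particular for every overlap $\beta$ (whose representation is produced during this phase) together with all of its suffixes, which are potential overlaps by Proposition~\ref{proposition:po-extension}.

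For Algorithm~\ref{algorithm:sketch-reduction} the processed strings are the strings $\alpha\beta$ attached to the arc-sets $(\alpha,\alpha\beta,X)$, and the prepending happens at line~\ref{alg:0:new-triples}, where an arc-set of extension $c\alpha$ (and string $c\alpha\beta$) is created from one of extension $\alpha$. I would therefore fix a processed string $\omega=\alpha\beta$ with $|\omega|>1$ and distinguish two cases. If $\alpha\neq\epsilon$, write $\alpha=c\,\alpha[2:]$; then $\omega[2:]=\alpha[2:]\beta$ is the string attached to the parent arc-set living in the cluster $C(\alpha[2:])$, from which $(\alpha,\alpha\beta,X)$ was generated. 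If instead $\alpha=\epsilon$, then $\omega=\beta$ is a basic arc-set, that is, an overlap whose representation was already produced by Algorithm~\ref{algorithm:sketch-overlap}, so the previous paragraph applies to $\omega[2:]=\beta[2:]$. In both cases $\omega[2:]$ is processed earlier.

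The crux is the ordering argument in the first case: I must show that $C(\alpha[2:])$ is analyzed strictly before $C(\alpha)$. This is exactly the stack discipline already invoked before Lemma~\ref{lemma:produced-arcs}: each extended cluster $C(c\alpha)$ is pushed (line~\ref{alg:0:final:push}) only \emph{during} the analysis of $C(\alpha)$, that is, after $C(\alpha)$ has been popped, so the parent cluster is always popped before any of its children. Applying this with parent $C(\alpha[2:])$ and child $C(\alpha)=C(\alpha[1]\cdot\alpha[2:])$ yields the required precedence. I expect this stack-order step to be the only non-immediate point; the rest is bookkeeping, and one must merely be careful to treat the basic arc-sets ($\alpha=\epsilon$) by deferring to the overlap-construction phase rather than looking for a parent cluster that does not exist.
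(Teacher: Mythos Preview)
Your proof is correct and follows essentially the same approach as the paper, which does not give a formal argument but simply remarks that the claim ``can be verified by a direct inspection'' of the two algorithms since both build each string by prepending a single character to a string already handled. Your write-up is more careful---you make the stack-order argument for $C(\alpha[2:])$ versus $C(\alpha)$ and the $\alpha=\epsilon$ boundary case explicit---but the underlying idea is identical.
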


Since each string $\omega$ considered by the algorithm is a substring
of some input read, we can represent $\omega$ in constant space by the
boundaries (\ie the first and the last index) of $\q{\omega}$ and $\q{\omega\$}$, 
instead of using the naïve representation with $O(|\omega|)$ space.
Consequently, the algorithm  operates only on
the (potentially compressed) FM-index of the collection of input reads.

Initially we compute the intervals $\q{c}$ and $\q{c\$}$ for each character $c\in \Sigma$ by
inspecting the FM-index, hence settling the case when $|\omega| = 1$.
Furthermore, the FM-index allows to compute in $O(1)$ time $\q{c\omega}$ and
$\q{c\omega\$}$ from $\q{\omega}$ and $\q{\omega\$}$ --- the backward $c$-extension of
$\omega$ and $\omega\$$ ---
for any character $c\in \Sigma \cup \{\$\}$~\citep{Ferragina2005}.

This representation allows to answer each query
$\pref{\omega}$, $\suff{\omega}$ and $\substr{\omega}$ in $O(1)$ time.
In fact, given $\q{\omega} = [b_\omega, e_\omega]$, then $\substr{\omega}=e_\omega - b_\omega + 1$
and $\pref{\omega}=e_{\$\omega} - b_{\$\omega} + 1$ where
$\q{\$\omega}=[b_{\$\omega},e_{\$\omega}]$ is the result of the backward $\$$-extension of
$\q{\omega}$ --- which can be obtained in $O(1)$ time.

Moreover, $\listpref{\omega}$ 
corresponds to the set of reads appearing in the interval
$\q{\$\omega}$ of the GSA\@.
Notice that no read can appear twice in such interval, hence a linear scan of the interval
$\q{\$\omega}$ suffices.

Answering to the query $\suff{\omega}$ requires considering the interval $\q{\omega\$} =
[b_{\omega\$}, e_{\omega\$}]$ (which is the reason it is included in the representation of
$\omega$).
In fact $\suff{\omega}=e_{\omega\$} - b_{\omega\$} + 1$.

A further optimization of the representation is possible.
Recall that $\q{\omega} = [b_\omega, e_\omega]$ and $\q{\omega\$} = [b_{\omega\$}, e_{\omega\$}]$.
Notice that $b_{\omega} = b_{\omega\$}$, since the sentinel $\$$ is lexicographically
smaller than all characters in $\Sigma$.
Hence the two intervals $\q{\omega}$ and $\q{\omega\$}$ can be represented by the three
integers $b_\omega, e_\omega, e_{\omega\$}$.

Instead of storing directly the three integers $b_\omega, e_\omega, e_{\omega\$}$, we use
two $n(m+1)$-long bitvectors, requiring $2n(m+1) + o(nm)$ bits and allowing to answer in
constant time to rank and select queries~\citep{clark_efficient_1996,jacobson_space-efficient_1989}.

Algorithm~\ref{algorithm:sketch-overlap} mainly has to represent the set of
potential overlaps (\ie the lists \emph{Last} and \emph{New}).
At each iteration, the potential overlaps in \emph{Last} (in \emph{New}, resp.) have the
same length, hence their corresponding intervals on the BWT are disjoint.
For each potential overlap $\beta \in Last$ (in \emph{New}, resp.) represented by the
triple $(b_{\beta}, e_{\beta}, e_{\beta\$})$, the first 
bitvector has $1$ in position $b_\beta$ and the second bitvector has $1$
in positions $e_{\beta\$}$ and $e_\beta$ --- since each $\q{\beta}$ is disjoint from all other
intervals, the $i$-th interval is represented by the $i$-th $1$ in the first bitvector, and
the $2i$-th and $(2i+1)$-th $1$s of the second bitvector.

Algorithm~\ref{algorithm:sketch-reduction} mainly has to represent clusters and, for each
cluster, the set $D$ containing the reads that must be deleted from all destination set of
the non-terminal arc-sets of the cluster.
A cluster groups together arc-sets whose overlaps are either pairwise different
or one is the prefix of the other.
Thus, the corresponding intervals on the BWT are either disjoint or
one contained in the other
(\ie partial overlap of the intervals cannot happen).

Moreover, also the destination set of the \emph{basic} arc-sets can be
represented by a set of pairwise disjoint or contained intervals on the BWT
(since $\listpref{\beta}$ of line~\ref{alg:0:listpref} corresponds to the reads of the
interval $\q{\$\beta}$ on the GSA).
The following proposition, which can be proved by a similar argument as
Lemma~\ref{lemma:produced-arcs}, describes the relation between destination sets.

\begin{proposition}
\label{proposition:arcset-in-same-cluster}
Let $ARC(\alpha, \alpha\beta_1, X_1)$, $ARC(\alpha, \alpha\beta_2, X_2)$ be two arc-sets
belonging to the cluster $C(\alpha)$ during the
execution of Algorithm~\ref{algorithm:sketch-reduction}.
Then $X_{1}\cap X_{2}\neq\emptyset$, or one of $X_{1}$, $X_{2}$ is a subset of the other.
Moreover, if $\beta_1$ is a prefix of $\beta_2$, then $X_2 \subseteq X_1$.
\end{proposition}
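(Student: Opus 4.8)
The plan is to drive everything from the prefix relation between the two overlaps $\beta_1$ and $\beta_2$, reducing the claim to a dichotomy and then showing that the two destination sets are cut out of $R^P(\beta_1)$ and $R^P(\beta_2)$ by removing one and the same set of reads. First I would record two elementary facts about prefix sets: since a read has a unique prefix of each given length, (i) if neither of $\beta_1,\beta_2$ is a prefix of the other then $R^P(\beta_1)\cap R^P(\beta_2)=\emptyset$, and (ii) if $\beta_1$ is a prefix of $\beta_2$ then $R^P(\beta_2)\subseteq R^P(\beta_1)$. Because $X_i\subseteq R^P(\beta_i)$ by Definition~\ref{definition:arc-triple}, fact (i) already yields $X_1\cap X_2=\emptyset$ whenever the overlaps are prefix-incomparable, which disposes of the disjoint horn of the dichotomy. (If the two overlaps are equal the claim is trivial, since then $R^P(\beta_1)=R^P(\beta_2)$ and the subtraction below gives $X_1=X_2$; so in the remaining case I may assume w.l.o.g.\ that $\beta_1$ is a prefix of $\beta_2$.)

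The crux is to prove that, in this remaining case, $X_2\subseteq X_1$. I would track how a destination set evolves along the lineage of an arc-set. A basic arc-set $ARC(\epsilon,\epsilon\beta_i,R^P(\beta_i))$ starts with destination $R^P(\beta_i)$, and each time its lineage is extended from a cluster $C(\sigma)$ to $C(c\sigma)$ the destination set is replaced by $X\setminus D$ (line~\ref{alg:0:new-triples}), where $D$ is the set computed at $C(\sigma)$. Two observations make this tractable. First, the set $D$ at a cluster $C(\sigma)$ depends only on $C(\sigma)$ — it is the union of the destination sets of the terminal arc-sets of $C(\sigma)$, accumulated once at line~\ref{algorithm:accumulate-reducible} — and it is subtracted identically from \emph{every} non-terminal arc-set of $C(\sigma)$. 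Second, arguing as in Lemma~\ref{lemma:produced-arcs}, the fact that both $ARC(\alpha,\alpha\beta_1,X_1)$ and $ARC(\alpha,\alpha\beta_2,X_2)$ reach $C(\alpha)$ forces both lineages to be present and non-terminal at \emph{every} ancestor cluster $C(\epsilon),C(\alpha[|\alpha|:]),\dots,C(\alpha[2:])$: had either lineage been terminal at one of these it would have been output and removed (line~\ref{algorithm:remove-terminal-from-cluster}) rather than extended, so the corresponding arc-set could not appear in $C(\alpha)$. Combining the two observations, both lineages traverse the same sequence of ancestor clusters and at each step subtract the same $D$; since $(A\setminus B)\setminus C=A\setminus(B\cup C)$, there is a single cumulative set $D^{*}=\bigcup_\sigma D_\sigma$ with $X_1=R^P(\beta_1)\setminus D^{*}$ and $X_2=R^P(\beta_2)\setminus D^{*}$.

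Finally I would conclude: using fact (ii), $X_2=R^P(\beta_2)\setminus D^{*}\subseteq R^P(\beta_1)\setminus D^{*}=X_1$, which simultaneously establishes the ``moreover'' clause and shows that in the prefix-comparable case one destination set is contained in the other. Together with the disjoint case, this gives that $X_1$ and $X_2$ are always either disjoint or nested, exactly the laminar property needed so that the corresponding BWT intervals are disjoint or one contained in the other. I expect the main obstacle to be the bookkeeping of the middle paragraph — rigorously justifying that the two lineages visit exactly the same clusters and subtract the same $D$ at each step, which is where the Lemma~\ref{lemma:produced-arcs}-style suffix argument does the real work; the prefix-set facts and the final set-difference computation are routine.
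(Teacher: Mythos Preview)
Your proposal is correct and follows exactly the approach the paper indicates: the paper does not spell out a proof but only says the proposition ``can be proved by a similar argument as Lemma~\ref{lemma:produced-arcs}'', and your lineage-tracing through the ancestor clusters $C(\epsilon),\dots,C(\alpha[2:])$, subtracting the common $D$ at each step to obtain $X_i=R^P(\beta_i)\setminus D^{*}$, is precisely that argument. (You also correctly read the first clause as the laminar condition $X_1\cap X_2=\emptyset$ rather than $\neq\emptyset$; the surrounding discussion of disjoint-or-contained BWT intervals confirms this is the intended statement.)
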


Given a cluster $C(\alpha)$, let $\gamma$ be the longest common prefix of all strings
$\alpha\beta$ such that $ARC(\alpha, \alpha\beta, X)$ is an arc-set of the cluster.
Then the numbers that represent the strings $\alpha\beta$ are all contained in the
$\gamma$-interval: therefore it suffices to consider numbers in the interval  
$[b_{\gamma}, e_{\gamma}]$ instead of the interval $[1, n(m+1)]$.
A more compact representation of the data that we have to manage consists of two
$(e_{\gamma} - b_{\gamma} +1)$-long vectors $V_b, V_e$ of integers, and a bitvector $B_x$
of length $e_{\$\gamma} - b_{\$\gamma} +1$ --- where $[b_{\$ \gamma}, e_{\$\gamma}]$ is
the $\$\gamma$-interval.
Each entry $V_b[i]$ ($V_e[i]$, resp.) is the number of
arc-sets $ARC(\alpha,\alpha\beta, \cdot)$  in $C(\alpha)$ with initial (final, resp.) boundary $b_\gamma +
i$.
Moreover, the destination sets that are considered in the cluster are contained in the set
of reads in $\q{\$\gamma}$, hence the  bitvector $B_{x}$ is to encode the set $D$: in
fact $B_x[i]$ is $1$ iff the $(b_{\$\gamma} + i)$-th read (in lexicographic order),
belongs to $D$.

Notice that the arc-sets $ARC(\alpha,\alpha\beta, X)$ in each cluster $C(\alpha)$ are
sorted according to the lexicographic order of $\beta$: this fact allows to coordinate the
sequential scan  of $V_{b}$ and $V_{e}$ that corresponds to the for loop at
lines~\ref{alg:0:2for:begin}--\ref{alg:0:step2:end} with a scan of the list of
destination sets (one for each arc-set) that must be updated at line~\ref{alg:0:step2:end}.
Also notice that $V_{b}$, $V_{e}$, and $B_{x}$ are constructed at
lines~\ref{algorithm:accumulate-reducible}--\ref{alg:0:step2:end} and exploited at
lines~\ref{alg:0:2for:begin}--\ref{alg:0:step2:end}, hence the data structures necessary
to answer rank and select queries in constant time must be built once for each cluster
just before line~\ref{alg:0:2for:begin}.

\section{Experimental Analysis}
\label{sec:experimental-analysis}

A C++ implementation of our approach, called FSG (short for Fast String Graph), has been
integrated in the SGA 
suite and is available at \url{http://fsg.algolab.eu} under the GPLv3 license.
Our implementation uses the Intel\textsuperscript{\textregistered} Threading
Building Blocks library in order to manage the parallelism.
The software is conceptually divided in the two phases illustrated in the previous
section, and each phase has been implemented as a computational pipeline using
the \texttt{pipeline} construct made available by the library.

We have evaluated the performance of FSG with three experiments: the first experiment
compares FSG and SGA on a standard benchmark of 875 million 101bp-long reads sequenced
from the NA12878 individual of the International HapMap and 1000 genomes project
(extracted from \url{ftp://ftp-trace.ncbi.nih.gov/1000genomes/ftp/technical/working/20101201_cg_NA12878/NA12878.hiseq.wgs.bwa.recal.bam}).
The second experiment, on an \emph{Escherichia coli} dataset, aims at investigating the
cause of the speedup obtained by FSG with respect to SGA.
Finally, the third experiment (which is on a synthetic dataset obtained from the Human
chromosome 1), studies the effect of coverage on the performance of FSG and SGA.
All experiments have been performed on an Ubuntu 14.04 server 
with four 8-core Intel\textsuperscript{\textregistered}
Xeon E5-4610v2 2.30GHz CPUs (hyperthreading was enabled for a total of 16 threads per processor).
The server has a NUMA architecture with 64GiB of RAM for each node (256GiB in
total).
To minimize the effects of the architecture on the executions, we used
\texttt{numactl} to preferably allocate memory on the
first node where also the threads have been executed (with 32
threads also the second node was used).

We have run SGA with its default parameters, that is SGA has computed exact overlaps after
having corrected the input reads.    
We could not compare FSG with Fermi,  since Fermi does not split its steps
in a way that allows to isolate the running time of the string graph
construction---most notably, it includes reads correction and scaffolding.
Since the string graphs computed by FSG and SGA are essentially the same, we have not
focused our analysis on the quality of the resulting assemblies, but we give a brief
analysis hinting that it is not possible to determine which assemblies are better (see Table~\ref{table:full-dataset-total-quality}).

\begin{table}[ht!]
\caption{Quality of the assemblies computed by FSG and SGA.}
\begin{center}
\begin{tabular}{lrr}
\toprule
& \multicolumn{1}{c}{SGA} & \multicolumn{1}{c}{FSG} \\
\otoprule
N. Contigs ($\ge 0$bp) & $15,322,517$ & $14,904,770$  \\
N. Contigs ($\ge 5000$bp) & $136,717$ & $136,693$  \\\midrule
Tot. Contig Length ($\ge 0$bp) & $4,154,574,477$ & $4,111,303,910$ \\
Tot. Contig Length ($\ge 5000$bp) & $1,173,041,496$ & $1,173,000,932$ \\
Tot. Length ($\ge 25000$bp) & $26,665,111$ & $26,674,888$ \\
N50 & $4,700$ & $4,700$ \\
N75 & $2,393$ & $2,393$ \\
\bottomrule
\end{tabular}
\end{center}
\label{table:full-dataset-total-quality}
\end{table}

For genome assembly purposes, only overlaps whose length is
larger than a user-defined threshold are considered.
The value of the minimum overlap length threshold that empirically showed the
best results in terms of genome assembly quality is around the 75\% of the read
length~\citep{Simpson2012}.
In order to assess how graph size affects performance, different values of
minimum overlap length (called $\tau$) between reads have been used (clearly,
the lower this value, the larger the graph).
The minimum overlap lengths used in this experimental assessment are 55, 65,
75, and 85, hence the chosen values test the approaches also on
larger-than-normal ($\tau=55$) and smaller-than-normal ($\tau=85$) string graphs.

Another aspect that we wanted to measure is the scalability of FSG for a different number
of threads.
We have run the programs with 1, 4, 8, 16, and 32 threads.    
In all cases, we have measured the elapsed (wall-clock) time and the total CPU time (the time
a CPU has been working).

\begin{table}[tb!]
\caption{Comparison of FSG and SGA, for different minimum overlap
lengths and numbers of threads.
The wall-clock time is the time used to compute the string graph.    
The CPU time is the overall execution time over all CPUs actually
used.
}
\label{tab:times}
\footnotesize\centering
\begin{tabular}{llllllll}%{XXWWWWWWWW}
\toprule
 Min.    &no.~of & \multicolumn{3}{c}{Wall time [min]} & \multicolumn{3}{c}{Work time [min]}\\
overlap	&threads& FSG	& SGA	& $\frac{\text{FSG}}{\text{SGA}}$ & FSG	& SGA & $\frac{\text{FSG}}{\text{SGA}}$\\
\otoprule
55	& 1	& 1,485	& 4,486	& 0.331 & 1,483	& 4,480& 0.331\\
	& 4	& 474	& 1,961	& 0.242	& 1,828	& 4,673& 0.391\\
	& 8	& 318	& 1,527	& 0.209	& 2,203	& 4,936& 0.446\\
	& 16	& 278	& 1,295	& 0.215	& 3,430	& 5,915& 0.580\\
	& 32	& 328	& 1,007	& 0.326	& 7,094	& 5,881& 1.206\\
\midrule								    
65	& 1	& 1,174	& 3,238	& 0.363	& 1,171	& 3,234& 0.363\\
	& 4	& 416	& 1,165	& 0.358	& 1,606	& 3,392& 0.473\\
	& 8	& 271	& 863	& 0.315	& 1,842	& 3,596& 0.512\\
	& 16	& 255	& 729	& 0.351	& 3,091	& 4,469& 0.692\\
	& 32	& 316	& 579	& 0.546	& 6,690	& 4,444& 1.505\\
\midrule								    
75	& 1	& 1,065	& 2,877	& 0.37	& 1,063	& 2,868& 0.371\\
	& 4	& 379	& 915	& 0.415	& 1,473	& 2,903& 0.507\\
	& 8	& 251	& 748	& 0.336	& 1,708	& 3,232& 0.528\\
	& 16	& 246	& 561	& 0.439	& 2,890	& 3,975& 0.727\\
	& 32	& 306	& 455	& 0.674	& 6,368	& 4,062& 1.568\\
\midrule								    
85	& 1	& 1,000	& 2,592	& 0.386	& 999	& 2,588& 0.386\\
	& 4	& 360	& 833	& 0.432	& 1,392	& 2,715& 0.513\\
	& 8	& 238	& 623	& 0.383	& 1,595	& 3,053& 0.523\\
	& 16	& 229	& 502	& 0.457	& 2,686	& 3,653& 0.735\\
	& 32	& 298	& 407	& 0.733	& 6,117	& 3,735& 1.638\\
\bottomrule				   
\end{tabular}
\end{table}

In terms of memory, SGA does not maintain the computed string graph in memory,
hence its peak memory usage is only dependent on the input size and in these
experiments was always about 63GiB.
Also the peak memory usage of our approach was
approximately equal to 138GiB for all the configurations.
As a consequence, the memory usage of our approach is practically only dependent
on the input size since it compactly stores the arc-sets of the first phase
and the stack maintaining the clusters to be processed in the second
phase does not grow to have more than $|\Sigma|\cdot(m - \tau)$
elements.

Table~\ref{tab:times} summarizes the running times of both approaches on the
different configurations of the parameters.
Notice that FSG approach is from 2.3 to 4.8 times faster
than SGA in terms of wall-clock time and from 1.9 to 3 times in terms of CPU time.
On the other hand, FSG uses approximately 2.2 times the memory used by SGA --- on the
executions with at most 8 threads.    

While FSG is noticeably faster than SGA on all instances, there are
some other interesting observations.    
The combined analysis of the CPU time and the wall-clock time on at most 8
threads (which is the number of physical cores of each CPU on our server)
suggests that FSG is more CPU efficient than SGA and is able to better
distribute the workload across the threads.
The latter value of 8 threads seems to be a sweet spot for the
parallel version of FSG.    

On a larger number of threads, and in particular the fact that the elapsed time
of FSG on 32 threads is larger than that on 16 threads suggests that, in its
current form, FSG might not be suitable for a large number of threads.    
However, since the current implementation of FSG is almost a proof of concept,
future improvements to its codebase and a better analysis of the race conditions
of our tool will likely lead to better performances with a large number of
threads.
Furthermore, notice that also the SGA algorithm, which is (almost) embarrassingly
parallel %(since, in principle, each input read can be processed independently)
and has a stable implementation, does not achieve a speed-up better than 6.4
with 32 threads.
As such, a factor that likely contributes to a poor scaling behaviour of both
FSG and SGA could be also the NUMA architecture of the server used
for the experimental analysis, which makes different-unit memory
accesses more expensive (in our case, the processors in each unit can
manage at most 16 logical threads, and only 8 on physical cores).

Notice that, FSG uses more memory than SGA. 
The reason is that genome assemblers have to correctly manage reads extracted from both
strands of the genome.    
In our case, this fact has been addressed by adding each reverse-and-complement read to
the set of strings on which the FM-index has been built, hence immediately doubling the
size of the FM-index.    
Moreover, FSG needs some additional data structures to correctly maintain potential
overlaps and arc-sets, as described in Section~\ref{sec:data-representation}.
The main goal of FSG is to improve the running time, and not necessarily to
decrease memory usage.

We also wanted to estimate the effect of the optimizations discussed in
Section~\ref{sec:data-representation}, by measuring the number of backward extensions performed.
We have instrumented FSG and SGA to count the number of accesses to the FM-index (which
can happen only when computing a backward extension) and we have run both program on the
NA12878 dataset with $\tau=85$: FSG has made $877 \cdot 10^9$ accesses, while SGA has made
$947 \cdot 10^9$ accesses, \ie SGA has made 8\% more backward extensions than FSG.
This result confirms that clustering together arc-sets and avoiding unnecessary backward
extensions actually improves on SGA's strategy.
On the other hand, such difference on the number of backward extensions is unable to fully
explain the different running times of FSG and SGA --- Table~\ref{tab:times} shows that,
on this specific instance, SGA needs 2.5x the time used by SGA.
Therefore we have designed a second experiment to investigate the main causes of the
different performances of FSG and SGA.

In our second experiment, we have run both FSG and SGA on the \emph{Escherichia coli} dataset downloaded
from \url{http://www.ebi.ac.uk/ena/data/view/CP009789} under
valgrind~\citep{DBLP:conf/pldi/NethercoteS07} to measure the instruction and memory access
patterns.
Since running a program under valgrind increases the running time by two orders of
magnitude, it was not feasible to use the same dataset as the first experiment, but we had
to use a much smaller one.
The results of this experiment are summarized in Table~\ref{tab:valgrind-fsg-sga}.    

Our initial conjecture was that the better efficiency achieved by FSG originated from
operating only on the 
FM-index of the input reads and by the order on which extension operations (\ie
considering a new string $c\alpha$ after $\alpha$ has been processed) are
performed.
These two characteristics of our algorithm allow to eliminate the redundant
queries to the index which, instead, are performed by SGA.
In fact, FSG considers each string that is longer than the threshold at most once,
while SGA potentially reconsiders the same string once for each read in which
the string occurs.
A consequence of our conjecture should have been fewer memory accesses and cache misses.    
Unfortunately the results we have obtained on memory accesses are inconclusive: the number
of memory accesses made by SGA are almost twice as many as those made by FSG.    
On the other hand, the number of cache misses that result in RAM accesses hints that SGA
is much more efficient in that regard.    
We have estimated the total time spent in memory accesses using the values suggested by
Valgrind: 1 cycle for each cache access, 10 cycles for each level 1 cache miss, and 100
cycles for each cache miss that results in a RAM access.    
Overall, FSG is more efficient than SGA, but definitely less than 10\% more efficient:
hence this reason alone cannot justify the difference in running times of the two programs.    

The fact that FSG consists of several linear scans suggests that it should be able to
better exploit the superscalar features of modern CPUs.    
In fact, our analysis of branch mispredictions confirms this fact:
the ratio of the  total number of branch predictions made by FSG is about
75\% of those made by SGA, and  the  total number of branch mispredictions made by
FSG is less than 50\% of those made by SGA.

\begin{table}[htb]
\centering
\begin{tabular}{lrrr}
&SGA&FSG&SGA/FSG\\\hline
Instruction read (millions)&1,271,545&743,807&1.710\\
Instruction cache level 1 miss read (millions)&2,187&15&142.642\\
Instruction RAM miss read (millions)&0.048&0.034&1.412\\
\midrule								    
Data read (millions)&358,461&199,845&1.794\\
D level 1 miss read (millions)&13,491&6,206&2.174\\
Data RAM miss read (millions)&146&725&0.202\\
Data write (millions)&143&90&1.589\\
D level 1 miss write (millions)&647&562&1.151\\
Data RAM miss write (millions)&7&273&0.028\\
Total time for memory accesses (millions of cycles)&180,446&168,706&1.070\\
\midrule								    
Conditional branches executed (millions)&159,912&106,167&1.506\\
Conditional branches mispredicted (millions)&10,292&4,989&2.063\\
Indirect branches executed (millions)&10,044&4,977&2.018\\
Indirect branches mispredicted (millions)&1,472&317&4.644\\
\end{tabular}
\caption{Comparing FSG and SGA on the Escherichia coli dataset using valgrind}
\label{tab:valgrind-fsg-sga}
\end{table}

We have compared the string graphs produced by FSG and SGA, since their respective notions
of string graphs are slightly different.
More precisely, FSG keeps multiple arcs with distinct labels between the same pairs of
vertices, while SGA retains only the arc having the shortest label.
When the minimum overlap length is 65bp, the string graph computed by FSG had $\sim 3.5$\% more
arcs than the one computed by SGA, but the impact on the actual assembly is not relevant
(see Table~\ref{table:full-dataset-total-quality}).

\begin{table}[htb]
\centering
\begin{tabular}{p{3em}p{3em}llllllll}
&&&\multicolumn{7}{c}{Coverage}\\
Read Length&Num. threads&program&4&8&16&32&64&128&256\\
101&8&FSG&2&3&6&11&22&36&36\\
101&8&SGA&1&2&3&6&13&17&17\\\cline{2-10}
101&16&FSG&2&3&6&12&22&36&36\\
101&16&SGA&1&2&4&7&13&18&18\\\cline{2-10}
101&32&FSG&3&4&7&13&23&36&36\\
101&32&SGA&2&3&5&8&14&19&19\\\cline{2-10}
101&64&FSG&5&7&9&15&26&38&38\\
101&64&SGA&5&5&7&10&16&21&21\\\midrule
250&8&FSG&1&3&5&10&18&39&78\\
250&8&SGA&1&1&1&3&5&10&19\\\cline{2-10}
250&16&FSG&2&3&6&10&19&40&78\\
250&16&SGA&1&1&2&3&6&11&20\\\cline{2-10}
250&32&FSG&3&4&7&11&20&41&78\\
250&32&SGA&2&3&3&4&7&12&21\\\cline{2-10}
250&64&FSG&5&6&9&14&22&43&80\\
250&64&SGA&5&5&5&7&9&14&23\\
\end{tabular}
\caption{Comparing FSG and SGA on different coverage values and different number of
  threads: Peak memory usage (in Gigabytes)}
\label{tab:scalability-coverage-memory}
\end{table}

\begin{table}[htb]
\centering
\begin{tabular}{p{3em}p{3em}llllllll}
&&&\multicolumn{7}{c}{Coverage}\\
Read Length&Num. threads&program&4&8&16&32&64&128&256\\
101&8&FSG&4&7&13&19&38&59&59\\
101&8&SGA&5&10&20&40&89&148&145\\\cline{2-10}
101&16&FSG&3&6&10&17&34&55&57\\
101&16&SGA&4&8&16&32&68&116&115\\\cline{2-10}
101&32&FSG&4&8&12&21&41&64&65\\
101&32&SGA&4&8&16&32&68&111&120\\\cline{2-10}
101&64&FSG&7&24&44&85&174&265&286\\
101&64&SGA&3&7&14&27&57&92&92\\\midrule
250&8&FSG&8&10&15&24&37&77&171\\
250&8&SGA&4&9&20&43&86&191&416\\\cline{2-10}
250&16&FSG&5&8&11&18&30&64&139\\
250&16&SGA&3&7&15&32&68&144&309\\\cline{2-10}
250&32&FSG&7&11&17&24&37&72&141\\
250&32&SGA&3&7&14&29&62&135&298\\\cline{2-10}
250&64&FSG&5&9&19&42&90&184&421\\
250&64&SGA&2&5&10&22&46&95&209\\
\end{tabular}
\caption{Comparing FSG and SGA on different coverage values and different number of threads: Running times (in hours)}
\label{tab:scalability-coverage-time}
\end{table}

A third experiment has been performed with the goal of studying the scalability of FSG on
larger coverage values.
We have extracted random reads from the human chromosome 1, with different
average coverage values and different read lengths.
Since the goal of this experiment is analyzing the running time and memory usage, while we
are not interested into the accuracy of the predictions, the reads do not contain errors
(\ie they are substrings extracted from the reference genome).
The results of these experiments are reported in
Tables~\ref{tab:scalability-coverage-memory} and~\ref{tab:scalability-coverage-time}.
The comparison between FSG and SGA confirms the results of the first experiment, and shows
that for large coverage values, FSG becomes even faster than SGA but uses an increasing
amount of memory.

\section{Conclusions and future work}

We present FSG: a tool implementing a new algorithm for constructing a string graph that works
directly querying  a FM-index representing a collection of reads, instead of processing
the input reads.    
Our main goal is to provide a simpler and fast algorithm to  construct string graphs, so
that its implementation can be easily integrated into an assembly pipeline that analyzes
the paths of the string graph to produce the final assembly.
Indeed, FSG could be used for related  purposes, such as transcriptome
assembly~\citep{Lacroix2008,Beretta2013}, and haplotype
assembly~\citep{BonizzoniVDL03}, and variant detection via aligning paths of the string
graph against a reference genome. 
These topics are some of the research directions that we plan to investigate.

More precisely, our  algorithm uses string queries that are efficiently implemented using the
information provided by the index and takes advantage  of a lexicographic  based-ordering of string
queries  that allows  to reduce the total number of such queries to build the string graph. Since
FSG  reduces the total number of queries and does not process the input  to compute  the transitive
reduction as done by SGA,  the current state-of-art tool for computing the string graph,  we are
able to show that FSG is significantly faster than SGA over genomic data. It would be interesting to
test our implementation  to compute  the string graph for a large collection of strings with
different characteristics than  genomic reads, such as for example when the  alphabet is  of larger size or  the input data consists of strings  of variable length.

\section*{Acknowledgments}

The authors acknowledge the support of the MIUR PRIN 2010-2011 grant
``Automi e Linguaggi Formali: Aspetti Matematici e Applicativi''  code 2010LYA9RH,
of the Cariplo Foundation grant 2013-0955 (Modulation of anti cancer immune
response by regulatory non-coding RNAs), of the FA 2013 grant ``Metodi
algoritmici e modelli: aspetti teorici e applicazioni in
bioinformatica'' code 2013-ATE-0281, and of the FA 2014 grant
``Algoritmi e modelli computazionali: aspetti teorici e applicazioni
nelle scienze della vita'' code 2014-ATE-0382.

% \bibliographystyle{plainnat}
% \bibliography{biblioBWTpaper,general-algo}

\end{document}